\newtheorem{prop}{Proposition}
\begin{document}

\title{Topology Control for Energy-Efficient Localization in Mobile Underwater Sensor Networks using Stackelberg Game}

\author{Yali~Yuan,~
        Chencheng~Liang,~\IEEEmembership{Member,~IEEE,}
        Megumi~Kaneko,~\IEEEmembership{Senior Member,~IEEE,}\\
        Xu~Chen,~\IEEEmembership{Member,~IEEE,}
        and~Dieter~Hogrefe
\thanks{Yali Yuan, Chencheng Liang and Dieter Hogrefe are in Telematics Group, Institute of Computer Science, University of Goettingen, 37077 Goettingen, Germany.}
\thanks{Megumi Kaneko is at Information Systems Architecture Science Research Division, National Institute of Informatics, 101-8430 Tokyo, Japan}
\thanks{Xu Chen is in School of Data and Computer Science, Waihuandong Road, Sun Yat-sen University, Guangzhou, China}
\thanks{Co-first author: Chencheng Liang}
\thanks{Corresponding author: Yali Yuan (yali.yuan@informatik.uni-goettingen.de)}
}

\maketitle

\begin{abstract}
The characteristics of mobile Underwater Sensor Networks (UWSNs), such as low communication bandwidth, large propagation delay, and sparse deployment, pose challenging issues for successful localization of sensor nodes. In addition, sensor nodes in UWSNs are usually powered by batteries whose replacements introduce high cost and complexity. Thus, the critical problem in UWSNs is to enable each sensor node to find enough anchor nodes in order to localize itself, with minimum energy costs. In this paper, an Energy-Efficient Localization Algorithm (EELA) is proposed to analyze the decentralized interactions among sensor nodes and anchor nodes. A Single-Leader-Multi-Follower Stackelberg game is utilized to formulate the topology control problem of sensor nodes and anchor nodes by exploiting their available communication opportunities. In this game, the sensor node acts as a leader taking into account factors such as `two-hop' anchor nodes and energy consumption, while anchor nodes act as multiple followers, considering their ability to localize sensor nodes and their energy consumption. We prove that both players select best responses and reach a socially optimal Stackelberg Nash Equilibrium. Simulation results demonstrate that the proposed EELA improves the performance of localization in UWSNs significantly, and in particular the energy cost of sensor nodes. Compared to the baseline schemes, the energy consumption per node is about $48\%$ lower in EELA, while providing a desirable localization coverage, under reasonable error and delay.
\end{abstract}

\begin{IEEEkeywords}
Underwater sensor networks, localization, energy consumption, topology control, Stackelberg game
\end{IEEEkeywords}

\IEEEpeerreviewmaketitle
\section{Introduction}
Localization of mobile sensor nodes is indispensable for enabling Underwater Wireless Sensor Networks (UWSNs). The gathered data is not useful if it is not correlated to a specific position of the sensor node. Many applications, such as aquatic environment monitoring, target tracking \cite{bochem2016tri}, geo-routing protocols~\cite{xie2006vbf} and pollution control, require the location information. However, Global Positioning Systems (GPS) cannot be used in UWSNs because of the high attenuation of radio signal and their power hungry nature. Moreover, UWSNs are mostly based on acoustic communication systems which suffer diverse issues stemming from the aquatic conditions, such as frequency dispersion, multipath fading, limited bandwidth and energy \cite{akyildiz2005underwater, tan2011survey}.

One important challenge is that underwater sensor nodes have limited resources due to the use of non-rechargeable batteries, which directly determines the network life time. Given the engineering hurdles and financial costs of battery replacement, the design of energy-efficient localization techniques becomes critical to extend the network lifetime in UWSNs. The sensor nodes' energy is consumed mainly by packet transmission and reception, which is much larger than that of the idle listening \cite{freitag2005whoi}, so adjusting the transmission power by topology control is one possible way to save energy in UWSNs. In most localization systems of UWSNs \cite{tan2011survey}, multiple anchor nodes are required to help one sensor node to find its position. The performance of these localization methods depend on different factors such as the nodes' initial reference position, number of sensor nodes, number of anchor nodes, ranging technique as well as the position of anchor nodes.

These facts motivate us to seek for energy-efficient solutions enabling each sensor node to find the required number of anchor nodes in view of localization by topology control. Topology control in this paper represents the process of controlling the amount/quality of neighboring nodes by transmission power control, where anchor nodes assist sensor nodes to find their locations in UWSNs. So far, only a limited number of schemes have been proposed for the localization service in UWSNs \cite{teymorian20093d}, \cite{zhou2011scalable}, \cite{luo2016localization}, especially for localization in sparse UWSNs by using topology control \cite{misra2015game}. However, the proposed scheme in \cite{misra2015game} only considers energy-saving for anchor nodes, although the sensor nodes deployed underwater consume the bulk of the energy, causing the network lifetime to decrease prematurely.

In this paper, we leverage the benefits of topology control to achieve a high localization performance and a low energy consumption in UWSNs. In such systems, in order to successfully localize a single sensor node, multiple anchor nodes are required. This motivates us to model this problem as a Single-Leader-Multi-Follower Stackelberg game, for which we define new utility metrics for trading-off localization ability and energy efficiency with no need for new equipment, nor extra cost, unlike the existing state-of-the-art schemes in \cite{misra2015game} and \cite{isik2009three}. In summary, the contributions of this paper are given as follows,
\begin{enumerate}
\item [1)] The considered localization problem is analyzed as a Single-Leader-Multi-Follower Stackelberg game whereby the trade-off between localization ability and energy consumption are considered in the utility functions. Optimal transmit powers for sensor nodes and anchor nodes are derived, and are shown to achieve Nash Equilibrium.
\item [2)] Based on our analysis, we propose the EELA (Energy-Efficient Localization Algorithm) that builds the strategic interactions among each sensor node and multiple anchor nodes for enabling energy-efficient localization.
\item [3)] Extensive numerical evaluations show that the proposed EELA scheme achieves about $48\%$ energy reduction for all nodes on average compared to the state-of-the-art approach, while achieving high quality localization coverage under reasonable error and delay levels.
\end{enumerate}

The rest of the paper is organized as follows. Section \ref{sec:relatedWork} discusses the related works. Section \ref{se:systemModel} introduces the system model. The detailed description of the proposed EELA scheme is presented in Section \ref{sec:ProblemSolu}. Simulation results and performance evaluation are shown in Section \ref{sec:Simulations}. Finally, Section \ref{sec:Conclusion} presents the conclusion and future work.

\section{Related Work}
\label{sec:relatedWork}

\subsection{Localization in UWSNs}
A set of localization techniques has been proposed for UWSNs in recent years. A detailed survey about these works was given in \cite{han2012localization} and \cite{tuna2017survey}. One localization scheme was presented in \cite{syed2006time}, which worked well in high latency networks and improved the energy efficiency as well. One-way and two-way MAC-layer message delivery were combined in this paper. However, it is only suitable for static UWSNs due to its assumption of constant propagation delays among sensor nodes. In \cite{isik2009three}, three-Dimensional Underwater Localization (3DUL) described a distributed, iterative and dynamic solution to the localization problem in the underwater acoustic sensor network with only three anchor nodes at the surface of the water. Trilateration algorithm was used to estimate the sensor node location. However, the error accumulated with the iteration increase, leading to inaccurate positioning of later sensor nodes that require the location information of new anchor nodes generated by sensor nodes in 3DUL. One localization scheme \cite{zhou2011scalable} was proposed in a hierarchical underwater sensor networks consisting of surface buoys, anchor nodes, and ordinary nodes. Sensor nodes were localized by the trilateration method. However, the node density is assumed to be high in this scheme due to the long distance acoustic communication between the anchor node and the surface buoy. In \cite{luo2016localization}, a novel scheme was proposed for long-term maritime surveillance monitoring tasks in ocean sensor networks. Liu et al. \cite{liu2016joint} proposed a joint solution for localization and time synchronization in mobile underwater sensor networks. The stratification effect of underwater medium was taken into account in localization. Schemes utilizing Autonomous Underwater Vehicles (AUVs) \cite{erol2007auv}, \cite{waldmeyer2011multi} and \cite{ojha2013hasl} as beacon nodes result in additional cost to the network.
\subsection{Topology Control}
The problem of topology control for WSNs has been extensively studied in recent years. The topology control scheme presented in \cite{li2005localized} and \cite{sethu2010new} started with neighbor finding, where all nodes transmit at their maximum transmission power. Later, each node computes the minimum transmission power required to maintain network connectivity. A game theoretic model of topology control to analyze the decentralized interactions among heterogeneous sensors was given in \cite{ren2009game}. The connectivity, the success rate and the power consumption were considered to achieve desirable network performances. Zhu et al. \cite{zhu2017game} took into account the signal to interference plus noise ratio and power efficiency to solve the distributed power control issues in cognitive wireless sensor network with imperfect information. A game-theoretic power control mechanism based on the Hidden Markov Model (HMM) was employed to maximize the network lifetime.

Although there are increasing interests in UWSNs in the past several years, few works have been investigated on approaches for topology control in UWSNs, especially for the localization by using topology control. In \cite{liu2012topology}, a distributed radius determination algorithm is designed for the mobility-based topology control problem. However, the energy consumption of message reception was not considered in this paper. A scale-free network model for calculating edge probability was employed to generate the initial topology randomly by Liu et al. \cite{liu2014complex}. In order to ensure the connectivity and coverage of the network, two kinds of cluster-heads were constructed by using a topology control strategy based on complex network theory. A Single-Leader-Multi-Follower Stackelberg game, called Opportunistic Localization by Topology Control (OLTC) scheme, was proposed in \cite{misra2015game} to build a localization model with high coverage and less energy consumption in sparse UWSNs. Trilateration algorithm was employed to localize the sensor node. However, in this paper, each sensor node always uses the maximum transmission power to broadcast the `Request' message, which results in more energy consumption. Besides, only the energy-saving for anchor nodes is considered in OLTC. However, in many scenarios of UWSNs, energy-saving for sensor nodes is much more important, since they have a limited battery and are deployed underwater, hence directly affecting the lifetime of the whole network.
\section{System Model}
\label{se:systemModel}
\subsection{System Overview}
The proposed EELA is implemented in the 3D UWSN, where $\left\{N_{a}\right\}$ is the set of anchor nodes deployed on the surface of water and $\left\{N_{s}\right\}$ denotes the set of sensor nodes deployed underwater. All nodes move passively given the water wave and underwater currents. Fig. \ref{fi:struScena} depicts the deployment scenario of EELA, where the cylinders represent anchor nodes on the water surface while dotted circles express sensor nodes which are randomly positioned underwater. Each sensor node $i$ has multiple neighboring anchor nodes within the current transmission range under power $P_{i}$, which is expressed by $n_{neig}(P_{i})$. Both sensor nodes and anchor nodes can change their transmission power to maximize their own benefits. The transmission power value for each sensor node is within $[0, P_{max}]$, corresponding to different transmission ranges within $[0,\ R_{max}]$ (see Eq. \eqref{eq:tp}). The transmission power value for each anchor node is within $[0, Q_{max}]$, corresponding to different transmission ranges within $[0,\ R_{max}]$ (see Eq. \eqref{eq:tp}). We set $P_{max}=Q_{max}$.
\begin{figure}[!t]
\centering
\includegraphics[width=0.45\textwidth]{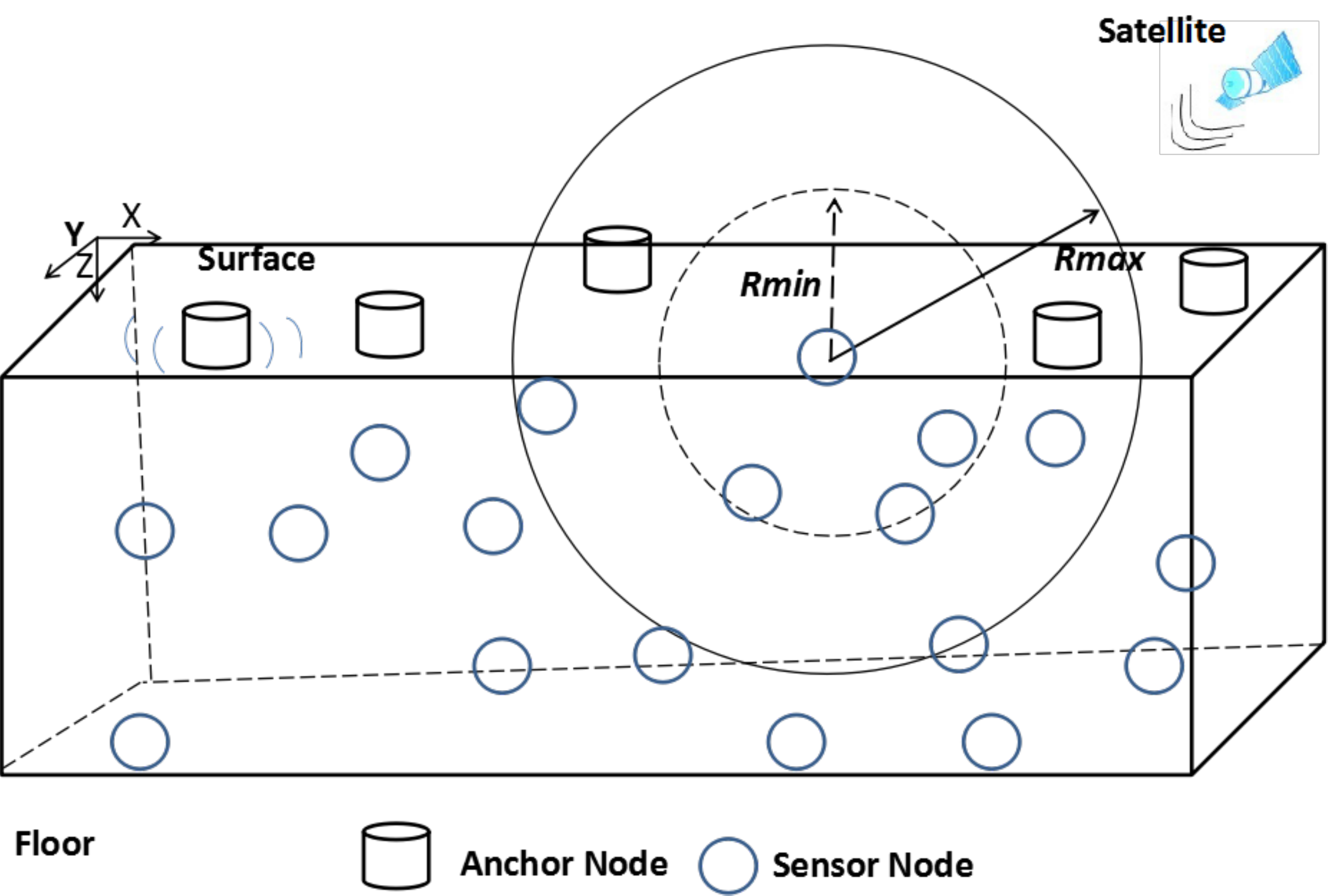}
\caption{Deployment scenario of the proposed EELA scheme}
\label{fi:struScena}
\end{figure}
For reference purposes, a list of symbols used in the description of our scheme is given in Table~\ref{ta:notation}.
We have the following assumptions as in \cite{misra2015game} in the design of EELA.
\begin{enumerate}
\item [1)] All nodes are time-synchronized.
\item [2)] Sensor nodes are randomly deployed underwater while anchor nodes randomly float on the water surface.
\item [3)] Sensor nodes are aware of their depth.
\end{enumerate}
\begin{table}[t]
\caption{Symbols used in the proposed EELA scheme}
\centering
\begin{tabularx}{0.49\textwidth}{c  X }
\hline
Parameter & Description \\ \hline
    $\left\{N_{s}\right\}$ & The set of sensor nodes \\
    $\left\{N_{a}\right\}$  & The set of anchor nodes \\
    $\{Nbr_{i}^{s}\}$ & The set of neighboring sensor nodes of $i$th sensor node\\
    $\{Nbr_{j}^{a}\}$ & The set of neighboring anchor nodes of $j$th anchor node\\
    $n_{h}^{req}$ & Additional number of anchor nodes required by $h$th sensor node\\
    $E_{tl}$ & Total remaining energy per node\\
    $C_{j}$ & Transmission energy cost per unit power of $j$th anchor node\\
    $E_{i}$ & Transmission energy cost per unit power of $i$th sensor node\\
    $P_{i}$ & Transmission power of $i$th sensor node \\
    $Q_{j}$ & Transmission power of $j$th anchor node\\
    $P_{max}$ & Maximum transmission range of sensor node\\
    $Q_{max}$ & Maximum transmission range of anchor node\\
    ${OA}_{j}$ & Localization ability of $j$th anchor node\\
    ${OS}_{i}$ & Ability of sensor node $i$ to find anchor nodes\\
\hline
\end{tabularx}
\label{ta:notation}
\end{table}
\subsection{Propagation Model}
\label{subse:proModel}
According to the underwater propagation model \cite{harris2007modeling}, the transmission power required by a sending node to a receiving node is given in Eq. \eqref{eq:tp}, where $P_{0}$ is the received signal strength,
\begin{equation}
\label{eq:tp}
    P = A(R, f) + P_{0}.
\end{equation}
In general, acoustic communications are used in underwater environment due to the small attenuation of acoustic signals \cite{erol2011survey}. The attenuation $A(R, f)$ in an underwater acoustic channel for a signal with frequency $f$ over a distance $R$ is given as
\begin{equation}
    \label{eq:attenuation}
     A(R, f) = A_{norm}R^{k}a(f)^{R},
 \end{equation}
where $A_{norm}$ is a normalization constant; $R$ is the distance in meters between the sender and the receiver; $f$ is the frequency; $a(f)$ is the absorption coefficient in dB/m. The spreading factor $k$ is an expression of the geometry of propagation where typically $k=1.5$ \cite{erol2011survey}.
Eq. \eqref{eq:highFre} describes the absorption coefficient with values in dB/km \cite{stojanovic2007relationship},
\begin{equation}
\label{eq:highFre}
10\log a(f)=\left\{
\begin{array}{ll}
0.003 + 0.11\frac{f^{2}}{1+f^{2}} \quad &\multirow{2}*{$f\geq 0.4$}\\
+ 44\frac{f^{2}}{4100+f^{2}} + 2.75.10^{-4}f^{2}&\\
\specialrule{0em}{0.5ex}{0.5ex}
0.002 + 0.11\frac{f^{2}}{1+f^{2}} \quad &\multirow{2}*{$f < 0.4$}\\
 + 0.011f^{2}&\\
\end{array}\right.
\end{equation}
where $T$ is the Thorp`s approximation for absorption loss in dB/km, and $f$ is center frequency in kHz.

Different communication ranges correspond to different bandwidths. For example, if the distance range is 10km to 100km, the bandwidth is limited to few kHz. However, 10 kHz matches the short range (from 1km to 10km) and a few hundred kHz bandwidth is available for ranges below 100m \cite{stojanovic1995underwater}.
\section{Problem Formulation and Solution}
\label{sec:ProblemSolu}
\subsection{Proposed Single Leader Multi Follower Stackelberg Game Formulation}
In Stackelberg game \cite{sherali1983stackelberg}, two types of players (leader and followers) are used to model the hierarchy of actions. The leader moves first and selects a strategy. Based on the action of the leader, the followers choose best response strategies that maximize their utilities. Then, the leader selects one strategy to maximize its utility based on the strategies of followers. In a distributed localization scenario, a sensor node can localize itself after receiving enough location beacons from multiple anchor nodes. However, due to the random and sparse node deployment, and the mobility of nodes, a sensor node may not find enough neighbor anchor nodes. A Single-Leader-Multi-Follower Stackelberg game \cite{sherali1983stackelberg} is employed, where the sensor node acts as the single leader while anchor nodes are multiple followers. The sensor node acts first and chooses a transmission power to send a request message, which is similar to the leader releasing a price in Stackelberg game. Each anchor node reacts, i.e, it selects a transmission power to send reply message, after the action of the sensor node.
\subsection{Utility Function of Anchor Nodes}
Anchor nodes are followers. They decide their strategies to handle the maximum number of requests from sensor nodes with minimum energy consumption.

Let $n_{i}^{req}$ be the additional number of anchor nodes required by sensor node $i$ for localization, which is defined by,
\begin{equation}
\label{eq:reqi}
n_{i}^{req}=\begin{cases}
n_{min}^{req} - |\mathcal{V}_{i}|, & \text{if $|\mathcal{V}_{i}| < n_{min}^{req}$}\\
0, & \text{otherwise,}
\end{cases}
\end{equation}
where $n_{min}^{req}$ represents the number of anchor nodes required for one sensor node to get its location and $\mathcal{V}_{i}$ is the set of the anchor nodes in the communication range of sensor node $i$.

The localization ability $OA_{j}(Q_{j}, P_{1}, P_{2},\cdots, P_{n_{arx}})$ of anchor node $j$ is composed of several terms expressing different effects,
\begin{align}
\label{eq:obability}
OA_{j}&(Q_{j}, P_{1}, P_{2},\cdots, P_{n_{arx}})= \quad \nonumber \\&\frac{n_{hd}(Q_{j})}{n_{arx}}+\frac{n_{hd}(Q_{j})}{\sum_{k=1}^{n_{arx}}n_{k}^{req}}
-\frac{\sum_{i=1}^{n_{arx}}P_{i}}{Q_{j}}.
\end{align}
In \eqref{eq:obability}, the first two terms $\frac{n_{hd}(Q_{j})}{n_{arx}}$ and $\frac{n_{hd}(Q_{j})}{\sum_{k=1}^{n_{arx}}n_{k}^{req}}$ are similar to those in the utility function in \cite{misra2015game}. $\frac{n_{hd}(Q_{j})}{n_{arx}}$ is the `ability of $j$th anchor node to resolve sensor requests', where $n_{hd}(Q_{j})$ is the number of requests that can be handled by anchor node $j$ with transmission power $Q_{j}$ and $n_{arx}$ is the total number of request messages received from sensor nodes. From Proposition~\ref{eq:sensorNneig} below, we can see that $n_{hd}(Q_{j})$ of a follower (anchor node) $j$ is non-decreasing with the increase of the transmission power $Q_{j}$. The second term $\frac{n_{hd}(Q_{j})}{\sum_{k=1}^{n_{arx}}n_{k}^{req}}$ is the `ability of $j$th anchor node to serve additional demands'. It means that only $n_{hd}(Q_{j})$ requests can be served among the total sum demand for additional anchor nodes $\sum_{k=1}^{n_{arx}}n_{k}^{req}$ from sensor nodes. Finally, the third term $\frac{\sum_{i=1}^{n_{arx}}P_{i}}{Q_{j}}$ expresses the relation between the sum-transmit power received from sensor nodes and the transmission power of the anchor node $j$. If the transmission power $\sum_{i=1}^{n_{arx}}P_{i}$ received from sensor nodes increases, anchor node $j$ has to handle more sensor nodes. Therefore, the localization ability $OA_{j}(Q_{j}, P_{1}, P_{2},\cdots, P_{n_{arx}})$ decreases.
\begin{prop}
\label{eq:sensorNneig}
For each anchor node $j$, the number of neighboring sensor nodes is higher or at least equal with the increase of transmission power $Q_{j}$.
\end{prop}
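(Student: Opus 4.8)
The plan is to reduce the claim to an elementary monotonicity-plus-inclusion argument. Fix the positions of all nodes (the statement concerns a single snapshot, so mobility plays no role), and let $\mathcal{N}_j(Q_j)$ denote the set of sensor nodes lying within the communication range $R_j$ that anchor node $j$ attains under transmit power $Q_j$; the quantity in the statement is $|\mathcal{N}_j(Q_j)|$, and the number of served requests $n_{hd}(Q_j)$ is the size of the subset of $\mathcal{N}_j(Q_j)$ that issued a request. It therefore suffices to show that $Q_j \mapsto R_j$ is non-decreasing and that enlarging the communication ball around $j$ can only add sensor nodes to $\mathcal{N}_j$.

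For the first part, I would invoke the propagation model. By Eq.~\eqref{eq:tp}, the power needed to cover a range $R$ is $P = A(R,f) + P_{0}$, and by Eq.~\eqref{eq:attenuation}, $A(R,f) = A_{norm} R^{k} a(f)^{R}$. Since $A_{norm} > 0$, $k = 1.5 > 0$, and the absorption coefficient obeys $a(f) > 1$ — because $10\log a(f) > 0$ for all $f$ in both branches of Eq.~\eqref{eq:highFre} — the factors $R^{k}$ and $a(f)^{R}$ are each strictly increasing in $R$, so their product $A(R,f)$, and hence $P$, is strictly increasing in $R$. Thus $R \mapsto P$ is a strictly increasing bijection onto its range, and its inverse $Q_j \mapsto R_j(Q_j)$ is strictly increasing as well.

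For the second part, if $Q_j' \ge Q_j$ then $R_j(Q_j') \ge R_j(Q_j)$, so every sensor node at distance at most $R_j(Q_j)$ from $j$ is also at distance at most $R_j(Q_j')$; that is, $\mathcal{N}_j(Q_j) \subseteq \mathcal{N}_j(Q_j')$, and consequently $|\mathcal{N}_j(Q_j)| \le |\mathcal{N}_j(Q_j')|$. The same inclusion restricts to the sensor nodes that sent requests, giving $n_{hd}(Q_j) \le n_{hd}(Q_j')$, which is exactly the monotonicity used in the discussion of the utility function $OA_j$ above.

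I do not anticipate a serious obstacle here; the statement is essentially geometric. The only point that needs a line of justification is the strict monotonicity of $A(R,f)$ in $R$ — in particular confirming $a(f) > 1$ so that the exponential term grows rather than decays — together with being explicit that the argument is made at a fixed time instant so that node mobility does not break the set inclusion.
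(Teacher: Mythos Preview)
Your argument is correct, and the route is genuinely different from the paper's. You fix an arbitrary node configuration, show via the propagation model that $Q_j \mapsto R_j$ is strictly increasing (and you take the extra care of checking $a(f)>1$ from Eq.~\eqref{eq:highFre}, which the paper uses implicitly), and then conclude by the set inclusion $\mathcal{N}_j(Q_j)\subseteq \mathcal{N}_j(Q_j')$ whenever $Q_j\le Q_j'$. This is a clean discrete/geometric argument that works for any fixed placement and gives the stated monotonicity directly.

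The paper instead models $n_{hd}(Q_j)$ as a \emph{continuous} quantity under a uniform-density assumption, writing $n_{hd}(Q_j)=\dfrac{4\pi n\,(g^{-1}(Q_j))^{3}}{3d^{3}}$ with $R_j=g^{-1}(Q_j)$, and then shows $\dfrac{\partial n_{hd}}{\partial Q_j}>0$ by differentiating. Your approach is more elementary and does not need the uniform-deployment hypothesis; the paper's approach, by contrast, yields an explicit differentiable expression for $n_{hd}(Q_j)$ that is reused verbatim in the proofs of Propositions~\ref{th:existLocalized} and~\ref{th:existUnlocalized}, where $\dfrac{\partial n_{hd}}{\partial Q_j}$ appears inside the first- and second-order conditions. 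So while your proof settles Proposition~\ref{eq:sensorNneig} cleanly, be aware that the later analysis relies on the density-based closed form, not just the monotonicity.
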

\begin{proof}
Let us assume that there are $n$ sensor nodes uniformly deployed in the area with size $d^{3}$. $n_{hd}(Q_{j})$ can be calculated by,
\begin{equation}
n_{hd}(Q_{j}) = \rho v_{j} = \frac{4\pi n R_{j}^{3}}{3d^{3}},
\end{equation}
where $\rho$ is the density of sensor nodes and $v_{j}$ is the volume of $j$th anchor node with the transmission range $R_{j}$.

According to Eqs. \eqref{eq:tp} and \eqref{eq:attenuation}, the transmission power $Q_{j}$ of anchor node $j$ is given as,
\begin{equation}
Q_{j}(R_{j})=A_{norm}R_{j}^{k}a(f)^{R_{j}} + Q_{j0}.
\end{equation}
The inverse function $g^{-1}(Q_{j})=R_{j}$ exists and is monotonically strictly increasing with the transmission power $Q_{j}$, because $\frac{\partial Q_{j}(R_{j})}{\partial R_{j}} = A_{norm}\left(kR_{j}^{k-1}a(f)^{R_{j}} + R_{j}^{k}a(f)^{R_{j}}\ln a(f)\right) > 0$. Therefore, $n_{hd}\left(Q_{j}\right)$ can be represented by Eq. \eqref{eq:nqj}. Then, the first order partial derivative of $n_{hd}(Q_{j})$ is given in Eq. \eqref{eq:neigProve},
\begin{equation}
\label{eq:nqj}
n_{hd}(Q_{j}) = \frac{4\pi n \left(g^{-1}(Q_{j})\right)^{3}}{3d^{3}},
\end{equation}
\begin{equation}
\label{eq:neigProve}
\frac{\partial n_{hd}(Q_{j})}{Q_{j}} = \frac{4\pi n}{d^{3}}\left(g^{-1}(Q_{j})\right)^{2}\frac{\partial g^{-1}(Q_{j})}{\partial Q_{j}}.
\end{equation}
Hence, $\frac{\partial n_{hd}(Q_{j})}{Q_{j}}>0$, which proves Proposition \ref{eq:sensorNneig}.
\end{proof}
Next, we define the payoff function of any anchor node $j$ by considering various factors such as energy cost, the ability to localize sensor nodes and the transmission power of sensor nodes as well as anchor nodes. It is hence defined as the weighted sum of the remaining energy ratio after transmission with $Q_{j}$ and its localization ability $OA_{j}(Q_{j}, P_{1}, P_{2},\cdots, P_{n_{arx}})$,
\begin{align}
\label{eq:anchorUtility}
&U_{F}\left(Q_{j}, P_{1}, P_{2},\cdots, P_{n_{arx}}\right)= w_{1j}\frac{E_{j}^{tl}-C_{j}Q_{j}}{E_{j}^{tl}}
\quad \nonumber \\& \quad \quad + w_{2j}OA_{j}\left(Q_{j}, P_{1}, P_{2},\cdots, P_{n_{arx}}\right).
\end{align}
In the first term of Eq. \eqref{eq:anchorUtility}, $E_{j}^{tl}$ is the total energy of the $j$th anchor node and $C_{j}$ is the transmission energy cost per unit power. Weights $w_{1j}$ and $w_{2j}$ define the trade-off between the energy consumption of anchor node and the localization ability of anchor node, and satisfy $w_{1j}+w_{2j}=1$, $w_{kj} \in (0, 1)$.
\subsection{Utility Function of Sensor Nodes}
In the considered localization problem, sensor nodes act as leaders. They watch for the decision of anchors which act as followers, and based on the response of followers, maximize their profits. The strategy of the leader is to minimize the energy consumption and to localize the maximum number of sensor nodes during the allowed localization delay, which is defined in Section \ref{subse:perMetric}. A sensor node broadcasts a `Request' message to explore the maximum number of anchors. After sensor nodes receive enough beacon locations from neighbor anchor nodes, it will localize itself.

The `ability of sensor node $i$ to find anchor nodes' $OS_{i}(P_{i}, Q_{1}, Q_{2},\cdots, Q_{n_{srx}})$ is composed of two terms,
\begin{align}
\label{eq:osability}
OS_{i}&(P_{i}, Q_{1}, Q_{2},\cdots, Q_{n_{srx}}) = \quad \nonumber \\&\frac{n_{neig}(P_{i})}{n_{srx}}- \frac{\sum_{j=1}^{n_{srx}}Q_{j}}{P_{i}}.
\end{align}
In Eq. \eqref{eq:osability}, the first term $\frac{n_{neig}(P_{i})}{n_{srx}}$ is the ratio of the number of anchor nodes $n_{neig}(P_{i})$ within `one-hop' and `two-hop' ranges of transmission power $P_{i}$ and the total number of anchor nodes $n_{srx}$ received, where $n_{neig}(P_{i})$ is a non-decreasing function of $P_{i}$, the proof of which is similar to that of Proposition~\ref{eq:sensorNneig}. The second term $\frac{\sum_{j=1}^{n_{srx}}Q_{j}}{P_{i}}$ is the ratio of the sum-transmission power of received anchor nodes and the transmission power of sensor node $i$. If the transmission power of anchor nodes $\sum_{j=1}^{n_{srx}}Q_{j}$ increases, more anchor nodes' beacon messages are received so less anchor nodes can be reached with a given power $P_{i}$. Therefore, the `ability of sensor node $i$ to find anchor nodes' $OS_{i}(P_{i}, Q_{1}, Q_{2},\cdots, Q_{n_{srx}})$ is inversely proportional to this ratio.

The payoff of any sensor node $i$ increases with the decrease in energy consumption. Also, it increases with the increase of the number of neighbor anchor nodes. In addition, the payoff of the leader decreases with each retry it does to send the `Request' message. Therefore, the payoff function of sensor node $i$ is defined as Eq.~\eqref{eq:unknownUtility}, which is the weighted sum of its remaining energy ratio and the `ability of sensor node $i$ to find anchor nodes',
\begin{align}
\label{eq:unknownUtility}
U_{L}&\left(P_{i}, Q_{1}, Q_{2},\cdots, Q_{n_{srx}}\right)= w_{1i}\frac{E_{i}^{tl}-E_{i}P_{i}}{E_{i}^{tl}} \quad \nonumber \\&+ w_{2i}OS_{i}(P_{i}, Q_{1}, Q_{2},\cdots, Q_{n_{srx}}).
\end{align}
In the first term of Eq. \eqref{eq:unknownUtility}, $E_{i}^{tl}$ is the total energy in the $i$th sensor node and $E_{i}$ is the transmission energy cost per unit power. Weights $w_{1i}$ and $w_{2i}$ provide a trade-off between energy consumption and `ability to find anchor nodes', satisfying $w_{1i} + w_{2i}= 1$, $w_{1i}, w_{2i} \in (0, 1)$.

\subsection{Existence and Uniqueness of Stackelberg Nash Equilibrium}
\label{subse:SNE}
The considered game achieves equilibrium when the sensor node (leader) selects the optimal transmission power to get its location with minimum energy consumption while anchor nodes (followers) choose their optimal transmission power to localize the maximum number of sensor nodes with minimum energy cost. At equilibrium, the benefit of each side can not be improved by unilaterally changing its own strategy. To find the Stackelberg equilibrium, each sensor node calculates the best reaction of anchor nodes to each of its mixed strategy and selects the mixed strategy that maximizes its own utility.

\subsubsection{Best Response Strategy of Anchor Nodes}
To define the strategy of the $j$th anchor node $Q_{j}$, the transmission power allocation problem can be cast as the optimization problem formulated below,
\begin{IEEEeqnarray*}{rCL}
\label{eq:brln}
    \max\limits_{Q_{j}}\ &U_{F}\left(Q_{j}, P_{1}, P_{2},\cdots, P_{n_{arx}}\right) = w_{1j}\frac{E_{j}^{tl}-C_{j}Q_{j}}{E_{j}^{tl}} \quad \nonumber \\&+ w_{2j}\left(\frac{n_{hd}(Q_{j})}{n_{arx}}+\frac{n_{hd}(Q_{j})}{\sum_{k=1}^{n_{arx}}n_{k}^{req}}-\frac{\sum_{i=1}^{n_{arx}}P_{i}}{Q_{j}}\right)\IEEEyesnumber
    \\s.t.\\
    & w_{1j}+w_{2j}=1,\quad w_{1j}, w_{2j} \in (0, 1), \IEEEyessubnumber*\\
    & Q_{j}\in[0, Q_{max}], P_{j}\in[0, P_{max}],\IEEEyessubnumber*\\
    & C_{j}, n_{arx}, n_{k}^{req}, E_{j}^{tl} > 0.\IEEEyessubnumber*
\end{IEEEeqnarray*}

All anchor nodes are non-cooperative. In Proposition \ref{th:existLocalized}, the existence of the best response strategy of each anchor node is proved and the unique equilibrium point is computed.
\begin{prop}
\label{th:existLocalized}
Let $Q_{j}$ be the strategy of the $j$th anchor node. The best response $Q_{j}^{\ast}$ of each anchor node is given as,
\begin{align}
\label{eq:brlocalized}
& Q_{j}^{\ast}(P_{1}, P_{2},\cdots, P_{n_{arx}}) = \quad \nonumber \\& \left(\frac{w_{2j}E_{j}^{tl}\sum_{i=1}^{n_{arx}}P_{i}}{w_{1j}C_{j}-Z_{j}(g^{-1}(Q_{j}))^{2}\frac{\partial g^{-1}(Q_{j})}{\partial Q_{j}}}\right)^{\frac{1}{2}},
\end{align}
with $w_{F}^{\ast}<w_{1j}<1$, $w_{1j}+w_{2j}=1$ and $w_{2j} \in (0, 1)$, where
\begin{align}
\label{eq:zj}
Z_{j}=\frac{4(1-w_{1j})\pi nE_{j}^{tl}}{d^3}\left(\frac{1}{n_{arx}}+\frac{1}{\sum_{k=1}^{n_{arx}}n_{k}^{req}}\right),
\end{align}
\begin{align}
\label{eq:wf}
w_{F}^{\ast}=\frac{\frac{Z_{j}}{1-w_{1j}}\left(g^{-1}(Q_{j})\right)^2\frac{\partial g^{-1}(Q_{j})}{\partial Q_{j}}}{\frac{Z_{j}}{1-w_{1j}}\left(g^{-1}(Q_{j})\right)^2\frac{\partial g^{-1}(Q_{j})}{\partial Q_{j}}+C_{j}}.
\end{align}
\end{prop}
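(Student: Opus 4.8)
The plan is to treat the maximization of $U_F$ in \eqref{eq:brln} as a smooth one–dimensional program in $Q_j\in[0,Q_{max}]$ with the weights held fixed, to obtain $Q_j^{\ast}$ from the first–order stationarity condition, and then to identify the weight range for which this stationary point is an admissible interior maximizer. Existence is the easy part: on $(0,Q_{max}]$ the objective is continuous, and since the coupling term $-w_{2j}\sum_{i}P_i/Q_j$ penalizes small $Q_j$ (driving $U_F\to-\infty$ as $Q_j\to 0^{+}$) while $U_F$ is finite at $Q_{max}$, a maximizer is attained in the interior or at the upper boundary, so it suffices to analyze $\partial U_F/\partial Q_j$. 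Substituting the closed form $n_{hd}(Q_j)=\tfrac{4\pi n}{3d^{3}}\bigl(g^{-1}(Q_j)\bigr)^{3}$ from Proposition~\ref{eq:sensorNneig} makes $U_F$ an explicit function of $Q_j$ through $g^{-1}$, and differentiating term by term gives $-w_{1j}C_j/E_j^{tl}$ from the energy ratio, $w_{2j}\bigl(\tfrac{1}{n_{arx}}+\tfrac{1}{\sum_{k=1}^{n_{arx}}n_k^{req}}\bigr)\partial n_{hd}(Q_j)/\partial Q_j$ from the two ability terms (with $\partial n_{hd}/\partial Q_j$ as in \eqref{eq:neigProve}), and $+w_{2j}\sum_{i}P_i/Q_j^{2}$ from the coupling term.

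Setting $\partial U_F/\partial Q_j=0$, multiplying through by $E_j^{tl}$, and collecting the weight–dependent constants into $Z_j$ of \eqref{eq:zj}, I would reduce the stationarity condition to
\begin{equation*}
w_{2j}E_j^{tl}\frac{\sum_{i=1}^{n_{arx}}P_i}{Q_j^{2}}=w_{1j}C_j-Z_j\bigl(g^{-1}(Q_j)\bigr)^{2}\frac{\partial g^{-1}(Q_j)}{\partial Q_j},
\end{equation*}
and solving for $Q_j$ yields exactly \eqref{eq:brlocalized}. For this expression to return a real, strictly positive power the denominator must be positive, i.e. $w_{1j}C_j>Z_j\bigl(g^{-1}(Q_j)\bigr)^{2}\tfrac{\partial g^{-1}(Q_j)}{\partial Q_j}$. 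Writing $Z_j=(1-w_{1j})\bar Z_j$ with $\bar Z_j=\tfrac{4\pi nE_j^{tl}}{d^{3}}\bigl(\tfrac{1}{n_{arx}}+\tfrac{1}{\sum_{k=1}^{n_{arx}}n_k^{req}}\bigr)$ independent of $w_{1j}$, this becomes a linear inequality in $w_{1j}$ whose solution is precisely $w_{1j}>w_F^{\ast}$ with $w_F^{\ast}$ as in \eqref{eq:wf}; together with $w_{1j}<1$ and $w_{1j}+w_{2j}=1$ this is the stated admissibility condition, and when it holds the denominator is positive and the formula is well defined.

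Finally I would confirm that the root above is the maximizer and is unique: the coupling term is strictly concave in $Q_j$ on $(0,Q_{max}]$ and the energy term is affine, so the sign of $\partial^{2}U_F/\partial Q_j^{2}$ reduces to controlling the curvature that $n_{hd}$ contributes, which I would bound using the convexity and monotonicity of $g$ (equivalently, concavity and monotonicity of $g^{-1}$) inherited from \eqref{eq:attenuation}; when the resulting strict concavity holds the stationary point is the unique maximizer, and if the unconstrained root leaves $[0,Q_{max}]$ concavity lets me project it onto the boundary ($Q_j^{\ast}=Q_{max}$ or $0$). The hard part is exactly this concavity/uniqueness step, since $n_{hd}$ enters through $\bigl(g^{-1}(Q_j)\bigr)^{3}$ whose second derivative need not be sign–definite even though $g^{-1}$ is increasing and concave; moreover $Q_j$ reappears on the right-hand side of \eqref{eq:brlocalized} through $g^{-1}(Q_j)$ and $\partial g^{-1}/\partial Q_j$, so \emph{computing} the equilibrium really means proving this implicit equation has a unique fixed point rather than exhibiting a genuine closed form.
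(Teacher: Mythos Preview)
Your approach is the same as the paper's: take the first derivative, solve the stationarity condition for $Q_j^{2}$, read off the weight constraint from positivity of the denominator, and then verify the second derivative is negative. The first three steps you carry out exactly as the paper does.

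Where you hesitate is precisely where the paper has to do real work. You note that concavity of $g^{-1}$ alone does not force $(g^{-1}(Q_j))^{3}$ to be concave, and that is correct. The paper does not rely on abstract concavity of $g^{-1}$; it instead expands $\partial^{2}U_F/\partial Q_j^{2}$, isolates the sign-determining factor
\[
G(Q_j)=2g^{-1}(Q_j)\Bigl(\tfrac{\partial g^{-1}}{\partial Q_j}\Bigr)^{2}+\bigl(g^{-1}(Q_j)\bigr)^{2}\tfrac{\partial^{2}g^{-1}}{\partial Q_j^{2}},
\]
and shows $G(Q_j)<0$ using two model-specific facts: (i) $g^{-1}(Q_j)=R_j\gg 1$, so $2g^{-1}<(g^{-1})^{2}$; and (ii) from the Thorp attenuation model one has $\ln a(f)\gg 1$, which (after writing $\partial^{2}g^{-1}/\partial Q_j^{2}=-\tfrac{\partial^{2}Q_j}{\partial R_j^{2}}\bigl(\tfrac{\partial g^{-1}}{\partial Q_j}\bigr)^{3}$) yields $\bigl|\tfrac{\partial^{2}g^{-1}}{\partial Q_j^{2}}\bigr|>\bigl(\tfrac{\partial g^{-1}}{\partial Q_j}\bigr)^{2}$. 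These two inequalities together make the negative term dominate. So the missing ingredient in your sketch is not a different idea but the willingness to invoke the quantitative structure of the underwater propagation model rather than generic convexity.

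Your final remark---that \eqref{eq:brlocalized} is implicit because $g^{-1}(Q_j)$ and its derivative appear on the right-hand side---is a legitimate observation, and the paper does not address it: it presents the stationarity condition rearranged for $Q_j$ and calls it the best response, treating it as if it were explicit. Your instinct to frame this as a fixed-point problem is sound, but it goes beyond what the paper itself proves.
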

\begin{proof}
We prove that problem \eqref{eq:brln} is a standard form of convex optimization and determine the expression of the optimum.

The first order partial derivative of $U_{F}\left(Q_{j}, P_{1}, P_{2},\cdots, P_{n_{arx}}\right)$ with respect to $Q_{j}$, for $j \in [1, N]$, is given as
\begin{align}
\label{eq:firstOrder}
    &\frac{\partial U_{F}\left(Q_{j}, P_{1}, P_{2},\cdots, P_{n_{arx}}\right)}{\partial Q_{j}}=\frac{-w_{1j}C_{j}}{E_{j}^{tl}}+w_{2j}\left({\frac{\sum_{i=1}^{n_{arx}}P_{i}}{Q_{j}^2}}
    \right.\quad \nonumber \\& \quad \quad\left.{+\frac{4\pi n}{d^3}\left(\frac{1}{n_{arx}}+\frac{1}{\sum_{k=1}^{n_{arx}}n_{k}^{req}}\right) (g^{-1}(Q_{j}))^{2}\frac{\partial g^{-1}(Q_{j})}{\partial Q_{j}}}\right).\quad \nonumber
\end{align}
Let $\frac{\partial U_{F}\left(Q_{j}, P_{1}, P_{2},\cdots, P_{n_{arx}}\right)}{\partial Q_{j}}=0$, then we get $Q_{j}^{2}$ as,
\begin{equation}
Q_{j}^{2}(P_{1}, P_{2},\cdots, P_{n_{arx}})=\frac{(1-w_{1j})E_{j}^{tl}\sum_{i=1}^{n_{arx}}P_{i}}{w_{1j}C_{j}-Z_{j}(g^{-1}(Q_{j}))^{2}\frac{\partial g^{-1}(Q_{j})}{\partial Q_{j}}},\quad \nonumber
\end{equation}
where $Z_{j}=\frac{4(1-w_{1j})\pi nE_{j}^{tl}}{d^3}\left(\frac{1}{n_{arx}}+\frac{1}{\sum_{k=1}^{n_{arx}}n_{k}^{req}}\right)$.

For existence of $Q_{j}$, the condition $w_{1j}C_{j}-Z_{j}(g^{-1}(Q_{j}))^{2}\frac{\partial g^{-1}(Q_{j})}{\partial Q_{j}}>0$ should be satisfied. Then, $Q_{j}^{\ast}(P_{1}, P_{2},\cdots, P_{n_{arx}})$ can be obtained by Eqs. \eqref{eq:brlocalized} and \eqref{eq:brln} with the condition $w_{F}^{\ast}<w_{1j}<1$, $w_{F}^{\ast}<w_{1j}<1$ and $w_{2j} \in (0, 1)$.

The second order partial derivative of $U_{F}\left(Q_{j}, P_{1}, P_{2},\cdots, P_{n_{arx}}\right)$ is given as
\begin{align}
&\frac{\partial^2 U_{F}\left(Q_{j}, P_{1}, P_{2},\cdots, P_{n_{arx}}\right)}{\partial Q_{j}^2}=
\frac{4\pi nw_{2j}}{d^3}\left({\frac{1}{n_{arx}}}\right.
\quad \nonumber \\&\quad \quad\left.{+\frac{1}{\sum_{k=1}^{n_{arx}}n_{k}^{req}}}\right)
\left({2g^{-1}(Q_{j})\left(\frac{\partial g^{-1}(Q_{j})}{\partial Q_{j}}\right)^2}\right.\quad \nonumber \\&\quad \quad\left.{+\left(g^{-1}(Q_{j})\right)^2\frac{\partial^2 g^{-1}(Q_{j})}{\partial Q_{j}^2}}\right)-\frac{2w_{2j}\sum_{i=1}^{n_{arx}}P_{i}}{Q_{j}^{3}}.\quad \nonumber
\end{align}

Here, we need to prove $G(Q_{j})<0$ in order to prove that $\frac{\partial^2 U_{F}\left(Q_{j}, P_{1}, P_{2},\cdots, P_{n_{arx}}\right)}{\partial Q_{j}^2}$ is negative, where
\begin{equation}
\label{eq:aa}
G(Q_{j})=\left(2g^{-1}(Q_{j})\left(\frac{\partial g^{-1}(Q_{j})}{\partial Q_{j}}\right)^2\left(g^{-1}(Q_{j})\right)^2\frac{\partial^2 g^{-1}(Q_{j})}{\partial Q_{j}^2}\right).
\end{equation}
Firstly, we prove $\frac{\partial^2 g^{-1}(Q_{j})}{\partial Q_{j}^2}<0$. According to Eqs. \eqref{eq:tp} and \eqref{eq:attenuation}, the transmission power $Q_{j}$ of anchor node $j$ is given as
\begin{equation}
Q_{j}(R_{j})=A_{norm}R_{j}^{k}a(f)^{R_{j}} + Q_{j0}.\quad \nonumber
\end{equation}
The inverse function $g^{-1}(Q_{j})=R_{j}$ of anchor node $j$ exists and is strictly monotonically increasing with the transmission power $Q_{j}$, because $\frac{\partial Q_{j}(R_{j})}{\partial R_{j}} = A_{norm}\left(kR_{j}^{k-1}a(f)^{R_{j}} + R_{j}^{k}a(f)^{R_{j}}\ln a(f)\right) > 0$ and $\frac{\partial Q_{j}(R_{j})}{\partial R_{j}}=\frac{1}{\frac{\partial g^{-1}(Q_{j})}{\partial Q_{j}}}$. Then, $\frac{\partial^2 g^{-1}(Q_{j})}{\partial Q_{j}^{2}}$ can be calculated as
\begin{align}
\label{eq:squareInverse}
\frac{\partial^2 g^{-1}(Q_{j})}{\partial Q_{j}^{2}}&=-\frac{\frac{\partial^2 Q_{j}(R_{j})}{\partial R_{j}^{2}}\frac{\partial g^{-1}(Q_{j})}{\partial Q_{j}}}{\left(\frac{\partial Q_{j}(R_{j})}{\partial R_{j}}\right)^2}\quad \nonumber \\&=-\frac{\partial^2 Q_{j}(R_{j})}{\partial R_{j}^{2}}\left(\frac{\partial g^{-1}(Q_{j})}{\partial Q_{j}}\right)^3.
\end{align}
Then, we have
\begin{align}
\label{eq:firstdOrder01}
&\frac{\partial^2 Q_{j}(R_{j})}{\partial R_{j}^{2}}=A_{norm}\left({k(k-1)R_{j}^{k-2}a(f)^{R_{j}}}\right.\quad \nonumber
\\&\quad  \left.{+kR_{j}^{k-1}a(f)^{R_{j}}\ln a(f)+kR_{j}^{k-1}a(f)^{R_{j}}\ln a(f)}\right.\quad \nonumber \\& \quad \left.{+R_{j}^{k}a(f)^{R_{j}}\ln^2 a(f)}\right)>0,
\end{align}
from which we get for \eqref{eq:squareInverse} $\frac{\partial^2 g^{-1}(Q_{j})}{\partial Q_{j}^{2}}<0$.

Secondly, we have $2g^{-1}(Q_{j})<\left(g^{-1}(Q_{j})\right)^2$, because $g^{-1}(Q_{j})=R_{j}$ and $R_{j}>>1$. In order to prove $G(Q_{j})<0$ given in Eq. \eqref{eq:aa}, we need to prove $\left(\frac{\partial g^{-1}(Q_{j})}{\partial Q_{j}}\right)^2<\left|\frac{\partial^2 g^{-1}(Q_{j})}{\partial Q_{j}^2}\right|$. According to Eq. \eqref{eq:squareInverse},
\begin{alignat}{2}
&\left|\frac{\partial^2 g^{-1}(Q_{j})}{\partial Q_{j}^2}\right|-\left(\frac{\partial g^{-1}(Q_{j})}{\partial Q_{j}}\right)^2\quad \nonumber \\
& = \left|\frac{\partial^2 Q_{j}(R_{j})}{\partial R_{j}^{2}}\left(\frac{\partial g^{-1}(Q_{j})}{\partial Q_{j}}\right)^3\right|-\left(\frac{\partial g^{-1}(Q_{j})}{\partial Q_{j}}\right)^2 \quad \nonumber \\
& = \left(\frac{\partial g^{-1}(Q_{j})}{\partial Q_{j}}\right)^2\left(\left|\frac{\partial^2 Q_{j}(R_{j})}{\partial R_{j}^{2}}\frac{\partial g^{-1}(Q_{j})}{\partial Q_{j}}\right|-1\right), \quad \nonumber
\end{alignat}
where $\frac{\partial^2 Q_{j}(R_{j})}{\partial R_{j}^{2}}$ and $\frac{\partial g^{-1}(Q_{j})}{\partial Q_{j}}$ are given as,
\begin{align}
\label{eq:con01}
&\frac{\partial^2 Q_{j}(R_{j})}{\partial R_{j}^{2}}=A_{norm}\left({k(k-1)R_{j}^{k-2}a(f)^{R_{j}}}\right.
\quad \nonumber \\ &\quad \quad\left.{+kR_{j}^{k-1}a(f)^{R_{j}}\ln a(f) + kR_{j}^{k-1}a(f)^{R_{j}}\ln a(f)}\right.
\quad \nonumber \\ &\quad \quad\left.{+ R_{j}^{k}a(f)^{R_{j}}\ln a(f)}\right),
\end{align}
\begin{equation}
\label{eq:con02}
\frac{\partial g^{-1}(Q_{j})}{\partial Q_{j}}=\frac{1}{A_{norm}\left(kR_{j}^{k-1}a(f)^{R_{j}} + R_{j}^{k}a(f)^{R_{j}}\ln a(f)\right)}.
\end{equation}
From Eq. \eqref{eq:highFre}, we know $\ln a(f)>>1$, from which we get $\frac{\partial^2 Q_{j}(R_{j})}{\partial R_{j}^{2}}\frac{\partial g^{-1}(Q_{j})}{\partial Q_{j}} \approx \frac{2k}{R_{j}}+1>1$, where $k=1.5$ and $R_{j}>0$. Therefore, $\left(\frac{\partial g^{-1}(Q_{j})}{\partial Q_{j}}\right)^2<\left|\frac{\partial^2 g^{-1}(Q_{j})}{\partial Q_{j}^2}\right|$ is proved.

Since the value of the second order partial derivative of $U_{F}\left(Q_{j}, P_{1}, P_{2},\cdots, P_{n_{arx}}\right)$ is negative, the maximum value of $U_{F}\left(Q_{j}, P_{1}, P_{2},\cdots, P_{n_{arx}}\right)$ can be achieved at $Q_{j}^{\ast}(P_{1}, P_{2},\cdots, P_{n_{arx}})$ by solving $\frac{\partial U_{F}\left(Q_{j}, P_{1}, P_{2},\cdots, P_{n_{arx}}\right)}{\partial Q_{j}}=0$, proving Proposition \ref{th:existLocalized}.
\end{proof}
\subsubsection{Best Response Strategy of Sensor Node}
To define the strategy of the $i$th sensor node $P_{i}$, the transmission power allocation problem can be cast as the optimization problem formulated below,
\begin{IEEEeqnarray*}{rCL}
\label{eq:brunln}
&\max\limits_{P_{i}}\ U_{L}\left(P_{i}, Q_{1}, Q_{2},\cdots, Q_{n_{srx}}\right) = w_{1i}\frac{\left(E_{i}^{tl}-E_{i}P_{i}\right)}{E_{i}^{tl}} \quad \nonumber \\&\quad \quad + w_{2i}\left(\frac{n_{neig}(P_{i})}{n_{srx}} - \frac{\sum_{j=1}^{n_{srx}}Q_{j}}{P_{i}}\right)\IEEEyesnumber
\\s.t.\\
& w_{1i}+w_{2i}=1,\quad w_{1i}, w_{2i} \in (0, 1), \IEEEyessubnumber*\\
& P_{i}\in[0, P_{max}], Q_{h}\in[0, Q_{max}], \IEEEyessubnumber*\\
& n_{srx}, E_{i}, E_{i}^{tl}> 0. \IEEEyessubnumber*
\end{IEEEeqnarray*}

The existence and uniqueness of the sensor node's transmission power at Nash equilibrium is proved in Proposition~\ref{th:existUnlocalized}.
\begin{prop}
\label{th:existUnlocalized}
Let $P_{i}$ be the strategy of the $i$th sensor node. The best response $P_{i}^{\ast}$ of each sensor node is given as,
\end{prop}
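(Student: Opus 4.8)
The plan is to mirror the argument used for the anchor-node best response in Proposition~\ref{th:existLocalized}, since the structure of the optimization problem \eqref{eq:brunln} is formally analogous to \eqref{eq:brln}. First I would compute the first-order partial derivative of $U_{L}\left(P_{i}, Q_{1}, Q_{2},\cdots, Q_{n_{srx}}\right)$ with respect to $P_{i}$. The energy term contributes $-w_{1i}E_{i}/E_{i}^{tl}$; the term $\frac{\sum_{j} Q_{j}}{P_{i}}$ contributes $+w_{2i}\frac{\sum_{j=1}^{n_{srx}}Q_{j}}{P_{i}^{2}}$; and the term $\frac{n_{neig}(P_{i})}{n_{srx}}$ contributes $+\frac{w_{2i}}{n_{srx}}\frac{\partial n_{neig}(P_{i})}{\partial P_{i}}$, where, exactly as in Proposition~\ref{eq:sensorNneig}, one writes $n_{neig}(P_{i}) = \frac{4\pi n (g^{-1}(P_{i}))^{3}}{3 d^{3}}$ using the monotone inverse $g^{-1}$ of the propagation-power map, so that $\frac{\partial n_{neig}(P_{i})}{\partial P_{i}} = \frac{4\pi n}{d^{3}}(g^{-1}(P_{i}))^{2}\frac{\partial g^{-1}(P_{i})}{\partial P_{i}}$.

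Next I would set this derivative to zero and solve for $P_{i}$. Collecting the two positive terms over $P_{i}^{2}$ (the $n_{neig}$ term does not naturally have a $P_{i}^{2}$ denominator, so the cleanest route is to treat the slowly-varying factor $(g^{-1}(P_{i}))^{2}\frac{\partial g^{-1}(P_{i})}{\partial P_{i}}$ the same way the anchor proof treats its analogue, i.e.\ as an implicit coefficient), one obtains
\begin{equation}
P_{i}^{\ast}(Q_{1}, Q_{2},\cdots, Q_{n_{srx}}) = \left(\frac{w_{2i}E_{i}^{tl}\sum_{j=1}^{n_{srx}}Q_{j}}{w_{1i}E_{i}-Y_{i}\left(g^{-1}(P_{i})\right)^{2}\frac{\partial g^{-1}(P_{i})}{\partial P_{i}}}\right)^{\frac{1}{2}},\nonumber
\end{equation}
with $Y_{i} = \frac{4(1-w_{1i})\pi n E_{i}^{tl}}{n_{srx} d^{3}}$, valid on the range of weights for which the denominator is positive; as in the anchor case this yields a threshold $w_{L}^{\ast} < w_{1i} < 1$ analogous to $w_{F}^{\ast}$ in \eqref{eq:wf}. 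I would then verify this is a maximizer by showing the second-order derivative is negative: the $-\frac{2w_{2i}\sum_{j} Q_{j}}{P_{i}^{3}}$ piece is manifestly negative, and the $n_{neig}$ contribution is handled exactly as $G(Q_{j})$ is handled in Proposition~\ref{th:existLocalized} — one invokes $\frac{\partial^{2} g^{-1}}{\partial P_{i}^{2}} < 0$ (from $\frac{\partial^{2} P_{i}(R_{i})}{\partial R_{i}^{2}} > 0$, same computation as \eqref{eq:firstdOrder01}) together with $R_{i} \gg 1$ and $\left(\frac{\partial g^{-1}}{\partial P_{i}}\right)^{2} < \left|\frac{\partial^{2} g^{-1}}{\partial P_{i}^{2}}\right|$. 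Concavity then guarantees the stationary point is the unique global maximizer on $[0, P_{max}]$ (interior, under the stated parameter conditions), which is the asserted best response and hence establishes existence and uniqueness of the sensor node's equilibrium power.

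The main obstacle I anticipate is the same subtlety the authors glossed over in Proposition~\ref{th:existLocalized}: the expression for $P_{i}^{\ast}$ is not a closed form but an implicit equation, because $g^{-1}(P_{i})$ and its derivative appear on the right-hand side. A fully rigorous treatment would require either a fixed-point argument showing the map $P_{i} \mapsto$ (right-hand side) has a unique fixed point in $[0, P_{max}]$, or — more in the spirit of this paper — showing that $\frac{\partial U_{L}}{\partial P_{i}}$ is strictly decreasing in $P_{i}$ (strict concavity) and changes sign exactly once on the interval, which is the cleaner way to pin down existence and uniqueness simultaneously. The other delicate point is justifying that the optimum is interior rather than at the boundary $P_{i} = P_{max}$; this needs the endpoint-derivative signs, i.e.\ $\frac{\partial U_{L}}{\partial P_{i}} > 0$ as $P_{i} \to 0^{+}$ (immediate, since the $\frac{\sum_j Q_j}{P_i}$ term blows up) and $\frac{\partial U_{L}}{\partial P_{i}} < 0$ at $P_{max}$ (which is precisely what the weight condition $w_{L}^{\ast} < w_{1i}$ encodes). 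I would flag both points but, to match the paper's level of rigor, carry out the derivative computation and concavity check and present $P_{i}^{\ast}$ in the implicit closed form above.
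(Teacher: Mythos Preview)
Your proposal is correct and follows essentially the same route as the paper: compute $\partial U_{L}/\partial P_{i}$, set it to zero to obtain the implicit expression for $P_{i}^{\ast}$, and establish concavity by reusing the $G<0$ argument from Proposition~\ref{th:existLocalized}. Your formula is algebraically equivalent to the paper's (the paper multiplies numerator and denominator by $d^{3}n_{srx}$ and writes $A_{i}=d^{3}E_{i}n_{srx}$, $Z_{i}=4\pi n w_{2i}E_{i}^{tl}$, whereas you absorb $d^{3}n_{srx}$ into $Y_{i}$); and you are in fact more careful than the paper in flagging the implicit-equation and boundary issues, which the paper does not address.
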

\begin{align}
\label{eq:brunlocalized}
&P_{i}^{\ast}(Q_{1}, Q_{2},\cdots, Q_{srx}) = \quad \nonumber \\& \quad \left(\frac{w_{2i}d^3E_{i}^{tl}n_{srx}\sum_{j=1}^{n_{srx}}Q_{j}}{w_{1i}A_{i}-Z_{i}\left(g^{-1}(P_{i})\right)^2\frac{\partial g^{-1}(P_{i})}{\partial P_{i}}}\right)^{\frac{1}{2}},
\end{align}
with $w_{L}^{\ast}<w_{1i}<1$, $w_{1i}+w_{2i}=1$ and $w_{2i} \in (0, 1)$,
where $A_{i}=d^3E_{i}n_{srx}$, $Z_{i}=4\pi nw_{2i}E_{i}^{tl}$ and
\begin{equation}
\label{eq:bb}
w_{L}^{\ast}=\frac{4\pi nE_{i}^{tl}\left(g^{-1}(P_{i})\right)^2\frac{\partial g^{-1}(P_{i})}{\partial P_{i}}}{d^3E_{i}n_{srx}+4\pi nE_{i}^{tl}\left(g^{-1}(P_{i})\right)^2\frac{\partial g^{-1}(P_{i})}{\partial P_{i}}}.
\end{equation}
\begin{proof}
We prove that problem \eqref{eq:brunln} is a standard form of convex optimization and determine the expression of the optimum.

The first order partial derivative of $U_{L}\left(P_{i}, Q_{1}, Q_{2},\cdots, Q_{n_{srx}}\right)$ with respect to $P_{i}$ is given as
\begin{align}
\label{eq:firstOrderun}
&\frac{\partial U_{L}\left(P_{i}, Q_{1}, Q_{2},\cdots, Q_{n_{srx}}\right)}{\partial P_{i}}=-\frac{w_{1i}E_{i}}{E_{i}^{tl}}+w_{2i}\left({\frac{4\pi n}{d^3n_{srx}}}\right.\quad \nonumber \\&\quad \quad \left.{(g^{-1}(P_{i}))^{2}\frac{\partial g^{-1}(P_{i})}{\partial P_{i}}+\frac{\sum_{j=1}^{n_{srx}}Q_{j}}{P_{i}^2}}\right).\quad \nonumber
\end{align}
Letting $\frac{\partial U_{L}\left(P_{i}, Q_{1}, Q_{2},\cdots, Q_{n_{srx}}\right)}{\partial P_{i}}=0$, we get
\begin{equation}
P_{i}^{2}(Q_{1}, Q_{2},\cdots, Q_{srx})=\frac{w_{2i}d^3E_{i}^{tl}n_{srx}\sum_{j=1}^{n_{srx}}Q_{j}}{w_{1i}A_{i}-Z_{i}\left(g^{-1}(P_{i})\right)^2\frac{\partial g^{-1}(P_{i})}{\partial P_{i}}},\quad \nonumber
\end{equation}
where $A_{i}=d^3E_{i}n_{srx}$ and $Z_{i}=4\pi nw_{2i}E_{i}^{tl}$.

To guarantee the existence of $P_{i}^{2}(Q_{1}, Q_{2},\cdots, Q_{srx})$, we need to set the condition $w_{1i}A_{i}-Z_{i}\left(g^{-1}(P_{i})\right)^2\frac{\partial g^{-1}(P_{i})}{\partial P_{i}}>0$.

The second order partial derivative of $U_{L}\left(P_{i}, Q_{1}, Q_{2},\cdots, Q_{n_{srx}}\right)$ is given as
\begin{align}
\label{eq:secondOrderun}
&\frac{\partial^2 U_{L}\left(P_{i}, Q_{1}, Q_{2},\cdots, Q_{n_{srx}}\right)}{\partial P_{i}^2}=\frac{4\pi nw_{2i}}{d^3n_{srx}}\left({2g^{-1}(P_{i})}\right. \nonumber \\& \quad \quad\left.{\left(\frac{\partial g^{-1}(P_{i})}{\partial P_{i}}\right)^2+\left(g^{-1}(P_{i})\right)^2\frac{\partial^2 g^{-1}(P_{i})}{\partial P_{i}^2}}\right) \nonumber\\
&\quad \quad-\frac{2w_{2i}\sum_{j=1}^{n_{srx}}Q_{j}}{P_{i}^3}.\quad \nonumber
\end{align}
Here, we use the fact that $G(Q_{j})<0$ given in Eq. \eqref{eq:aa} as proved for Proposition \ref{th:existLocalized}. Therefore, the value of the second order partial derivative of $U_{L}\left(P_{i}, Q_{1}, Q_{2},\cdots, Q_{n_{srx}}\right)$ is negative and the maximum value of $U_{L}\left(P_{i}, Q_{1}, Q_{2},\cdots, Q_{n_{srx}}\right)$ can be achieved at $P_{i}^{\ast}(Q_{1}, Q_{2},\cdots, Q_{srx})$, proving Proposition \ref{th:existUnlocalized}, for the existence and uniqueness of the Nash equlibrium of the proposed EELA game.

\end{proof}
\subsection{Algorithm Design}
We now design our proposed EELA algorithm, where sensor nodes localize themselves once they receive enough location beacon information from neighboring anchor nodes. The proposed algorithm consists of four phases.
\begin{itemize}
\item {\underline{Phase 1:} each sensor node builds a neighbor list containing the `Wakeup' (Type, ID, Time) message received from its neighbor anchor nodes. Each anchor node also builds its neighbor anchor list with its received `Wakeup' message from its neighbor anchor nodes. }
\item {\underline{Phase 2:} anchor nodes which received `Wakeup' messages from neighbor anchor nodes, broadcast their neighbor anchor list by using `AnchorNbr' (Type, ID, Time, NbrAnchorNodes) message. Each sensor node updates its neighbor list information and adds the anchor node's neighbor information. The game starts at the third phase of nodes communication.}
\item {\underline{Phase 3:} to start the opportunistic localization, each sensor node explores its maximum opportunities with the consideration of energy consumption and neighbor anchor nodes by the procedures described in Algorithm~\ref{sd:pseudoCode01}. `One-hop' neighbor anchor nodes are considered first. If the number of `one-hop' neighbor anchor nodes is enough to localize the sensor node, it will not handle the `two-hop' neighbors. This is because anchor nodes in `one-hop' neighbor list have more accurate information, such as the one-way time delay. However, due to the node mobility and random deployment, `two-hop' anchor nodes should be considered given the few `one-hop' anchor nodes in UWSNs. Fig. \ref{fi:twoHop} depicts the initial ¡®two-hop¡¯ transmission power calculation, where $AN_{1}$ and $AN_{2}$ (black circles), which are `one-hop' and `two-hop' anchor nodes respectively, act as multiple followers. $SN_{3}$ is the sensor node acting as the single leader. If $SN_{3}$ uses the maximum transmission power $P_{max}$ to have a transmission radius of $R_{max}$, $OA$ is the opportunistic localization range. Proposition \ref{th:twohop} is used to evaluate the transmission power required to reach the `two-hop' anchor nodes.
    \begin{prop}
    \label{th:twohop}
    Let $p_{31}$ and $p_{32}$ be the transmission powers of sensor node $SN_{3}$ required to send a Request message to anchor nodes $AN_{1}$ and $AN_{2}$, respectively. Let $q_{12}$ be the transmission power required at $AN_{1}$ to reach $AN_{2}$. Then, if anchor nodes $AN_{1}$ and $AN_{2}$ are in the `one-hop' and `two-hop' neighbor list of $SN_{3}$, respectively, we can set $p_{32} < p_{31} + q_{12}$. Moreover, node $AN_{2}$ in the `two-hop' neighbor list of $SN_{3}$ can be moved to the `one-hop' neighbor list, if we use $p_{3} = p_{31} + q_{12}$ as the transmission power of $SN_{3}$.
    \end{prop}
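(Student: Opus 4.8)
The plan is to establish the two claims of Proposition~\ref{th:twohop} by purely geometric reasoning about the underwater propagation/attenuation model of Eqs.~\eqref{eq:tp} and~\eqref{eq:attenuation}, combined with the triangle inequality applied to the positions of $SN_{3}$, $AN_{1}$, and $AN_{2}$. The key observation I would exploit is that, for a fixed frequency $f$, the required transmit power to reach distance $R$ is a strictly increasing function of $R$: from Eq.~\eqref{eq:attenuation}, $P(R) = A_{norm}R^{k}a(f)^{R} + P_{0}$, and this is the same function $g(\cdot)$ whose inverse $g^{-1}$ was already shown to be strictly monotonically increasing in the proof of Proposition~\ref{eq:sensorNneig}. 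So comparing powers is equivalent to comparing the corresponding ranges (distances).

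First I would set up notation: let $r_{31}$, $r_{32}$, $r_{12}$ denote the Euclidean distances $\|SN_3 - AN_1\|$, $\|SN_3 - AN_2\|$, $\|AN_1 - AN_2\|$ respectively, so that $p_{31} = g(r_{31})$, $p_{32} = g(r_{32})$, and $q_{12} = g(r_{12})$. Since $AN_1$ is in the one-hop list and $AN_2$ in the two-hop list of $SN_3$, the direct distance $r_{32}$ is larger than $r_{31}$ (the node could not be heard directly at the one-hop power level), and the triangle inequality gives $r_{32} \le r_{31} + r_{12}$. The first claim, $p_{32} < p_{31} + q_{12}$, then follows by showing $g(r_{32}) < g(r_{31}) + g(r_{12})$. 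This is where a small amount of care is needed: monotonicity alone gives $g(r_{32}) \le g(r_{31}+r_{12})$, so it remains to argue $g(r_{31}+r_{12}) < g(r_{31}) + g(r_{12})$, i.e.\ that $g$ is subadditive on the relevant range, or at least that $g(a+b) < g(a)+g(b)$ for the distances at hand. I would handle this by noting that $g(R) - P_0 = A_{norm}R^{k}a(f)^{R}$ with $g(0) - P_0 = 0$ (for $k>0$), plus the additive constant $P_0>0$; the constant term alone already gives $g(a)+g(b) = [g(a+b)-P_0] \cdot (\text{correction}) + 2P_0$, and more cleanly, since $P_0 > 0$ appears once on the left and twice on the right, it suffices that the attenuation part $A(R,f)$ satisfies $A(r_{31}+r_{12},f) \le A(r_{31},f)+A(r_{12},f)+P_0$, which holds comfortably for realistic ranges. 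For the second claim, I would simply observe that choosing $p_3 = p_{31}+q_{12}$ gives, by the first claim, $p_3 = p_{31}+q_{12} > p_{32} = g(r_{32})$, hence the transmission radius $g^{-1}(p_3) > r_{32}$, so $AN_2$ now lies strictly within the one-hop transmission range of $SN_3$ and can be relabeled a one-hop neighbor.

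The main obstacle is the subadditivity-type inequality $g(r_{31}+r_{12}) < g(r_{31})+g(r_{12})$: the function $A_{norm}R^{k}a(f)^{R}$ is not globally subadditive because the exponential factor $a(f)^R$ can make it grow faster than linearly in some regimes (note $a(f) > 1$ when absorption is measured this way, making the term blow up). I expect the paper to finesse this by invoking the regime assumptions already used elsewhere — in particular that $R_j \gg 1$ and that the relevant distances are comparable — or by leaning on the additive constant $P_0$ (equivalently $P_{i0}$, $Q_{j0}$) to absorb the slack. If a clean inequality is not available in full generality, the fallback is to state the result under the mild assumption that the one-hop and two-hop distances are within the communication ranges for which the bandwidth/absorption parameters of Section~\ref{subse:proModel} are valid, which is exactly the operating regime of the algorithm. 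I would close the proof by emphasizing that both parts are a direct consequence of the strict monotonicity of $g$ together with the triangle inequality, so no optimization or equilibrium machinery is needed here.
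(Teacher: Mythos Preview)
Your approach is the same as the paper's at the skeleton level: both arguments rest on (i) the strict monotonicity of the power--range map $P(d)$ from Eq.~\eqref{eq:tp}, and (ii) the triangle inequality $d_{32} < d_{31} + d_{12}$. The paper's proof is in fact shorter than yours: it simply records those two facts and asserts that, since $p_{31}$ and $q_{12}$ cover $d_{31}$ and $d_{12}$ respectively, setting $p_{3}=p_{31}+q_{12}$ makes $AN_{2}$ a one-hop neighbor. It does not separately address the first claim $p_{32}<p_{31}+q_{12}$ beyond this.

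You are more careful than the paper here, and the obstacle you flag is real. Monotonicity plus the triangle inequality only give $g(d_{32}) \le g(d_{31}+d_{12})$; passing from this to $g(d_{32}) < g(d_{31})+g(d_{12})$ genuinely requires a subadditivity-type step, and as you note, the attenuation term $A_{norm}R^{k}a(f)^{R}$ with $a(f)>1$ is \emph{super}additive, not subadditive, so in the collinear extreme $d_{32}=d_{31}+d_{12}$ the inequality can fail once the exponential part dominates $P_{0}$. The paper's proof silently skips this point; it effectively treats ``$p_{31}$ covers $d_{31}$ and $q_{12}$ covers $d_{12}$'' as implying ``$p_{31}+q_{12}$ covers $d_{31}+d_{12}$,'' which is precisely the subadditivity assumption in disguise. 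Your fallback --- restricting to the operating regime of Section~\ref{subse:proModel} and leaning on the additive $P_{0}$ slack --- is the honest way to close the gap, and is more than the paper itself supplies.
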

     \vspace{6pt}

\emph{Remark 1}: $q_{12}$ can be known at sensor node $SN_{3}$ by estimating the distance $d_{12}$ from the received anchor nodes' messages.

\emph{Remark 2}: The final optimal transmission power of the sensor node will be selected by Proposition \ref{th:existUnlocalized}}.
    \begin{proof}
         From Eq. \eqref{eq:tp}, the transmission power $P(d)$ is an increasing function of the distance $d$. Given the triangle inequality $d_{32} < d_{31} + d_{12}$, and since $q_{12}$ and $p_{31}$ are sufficient to cover distances $d_{12}$ and $d_{31}$ respectively, the `two-hop' neighbor anchor node $AN_{2}$ becomes a `one-hop' neighbor anchor node by setting $p_{3} = p_{31} + q_{12}$ as the new initial transmission power.


    \end{proof}
 If we have multiple `two-hop' anchor nodes, Proposition \ref{th:twohop} is applied sequentially, until the required number of nodes is reached.
    \begin{figure}[!t]
        \centering
        \includegraphics[scale=0.35]{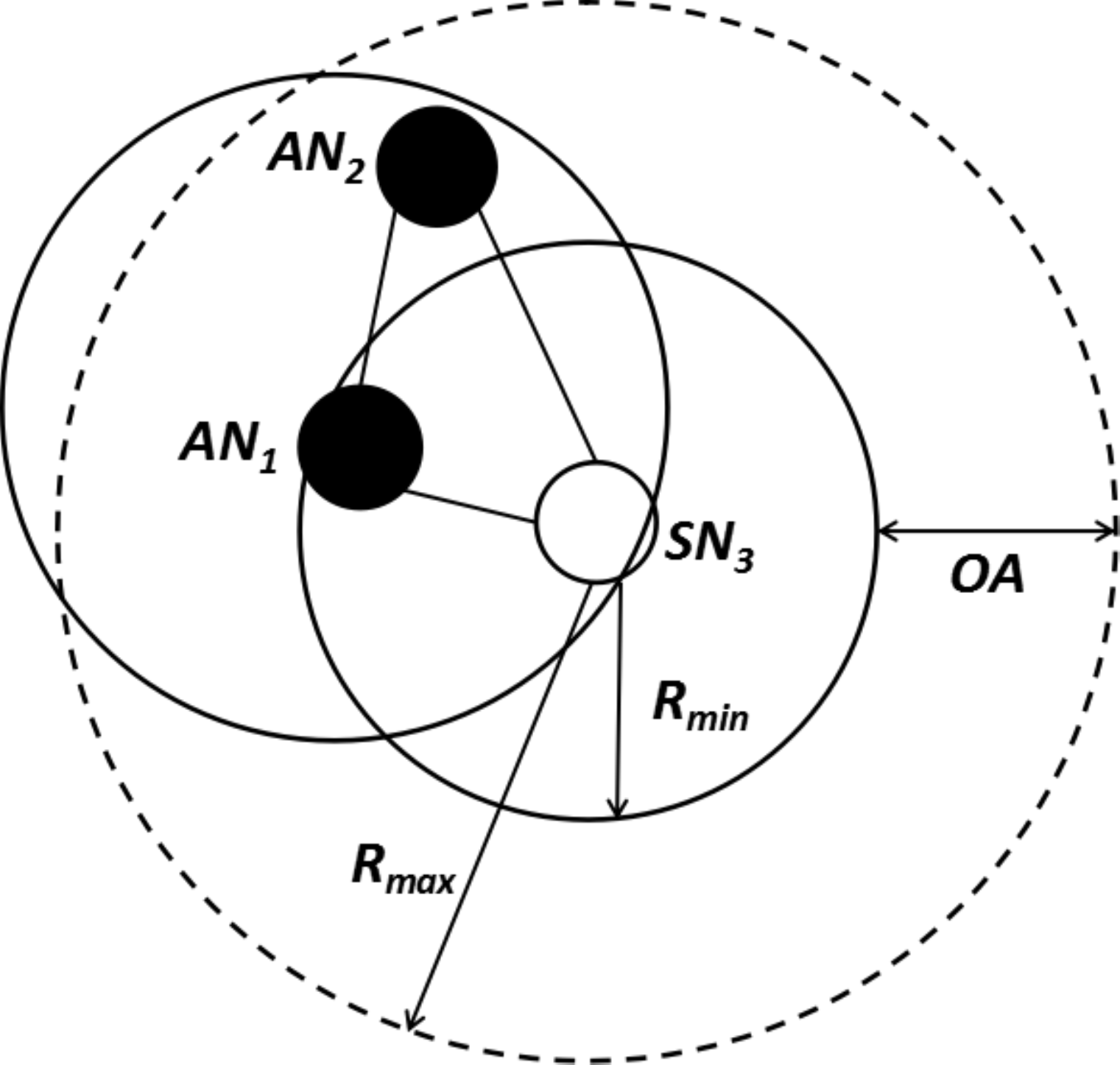}
        \caption{A scenario depicting the `two-hop' transmission power calculation}
        \label{fi:twoHop}
    \end{figure}
    \begin{algorithm}
        \SetKwInOut{Input}{Input}
        \SetKwInOut{Output}{Output}
        \Input{$\left\{N_{s}\right\}$, $\left\{N_{a}\right\}$, $P_{ini}$, $\{Nbr_{i}^{s}\}$, $n_{min}^{req}$}
        \Output{Optimized action $P_{i}{'}$}
         $P_{i}^{'} = 0$, $P_{i} = P_{ini}$, $U_{i}^{L\ast} = -\infty$.\\
         \For{each message received from an anchor node $j \in \{Nbr_{i}^{s}\}$}
         {    \eIf{$j \in \{Nbr_{i}^{s}\}^{`one-hop'}$}
              {
                 Add $j$ to $\mathcal{V}_{i}^{`one-hop'}$.
              }
              {
                Add $j$ to $\mathcal{V}_{i}^{`two-hop'}$.
              }
         }
        \eIf{$|\mathcal{V}_{i}^{`one-hop'}| \geq n_{min}^{req}$}
          {
            \For{each anchor node $j \in \mathcal{V}_{i}^{`one-hop'}$}{
                Calculate the utility $U_{i}^{L}$.
            }
            $P_{i}^{'} \gets argmax\ U_{i}^{L}$.\\
          }
          {
           \eIf{$0 \leq (|\mathcal{V}_{i}^{`one-hop'}| + |\mathcal{V}_{i}^{`two-hop'}|) <  n_{min}^{req}$}
           {
               $n_{i}^{req} \gets (n_{min}^{req} - |\mathcal{V}_{i}|)$,\\
               $P_{i}^{'} \gets P_{max}$.
           }
           {
              \For{each anchor node $j \in \mathcal{V}_{i}$}{
                Calculate the utility $U_{i}^{L}$.
              }
              $P_{i} \gets argmax\ U_{i}^{L}$.\\
              \eIf{$P_{i} > P_{max}$}
              {
                $P_{i}^{'} \gets P_{max}$.
              }
              {
                $P_{i}^{'} \gets P_{i}$.
              }
           }
        }
        Broadcast `Request' message at transmission power $P_{i}^{'}$.\\
        \caption{Topology control of a sensor node (leader)}
        \label{sd:pseudoCode01}
    \end{algorithm}
\item {\underline{Phase 4:} After anchor nodes receive the `Request' (Type, ID, Time, $n_{req}$) message from sensor nodes, an optimal transmission power will be selected by Proposition \ref{th:existLocalized} to broadcast the `Reply' (Type, ID, Time, Location) message taking into account the factors of energy consumption and the ability to localize sensor node. The detailed steps are given in Algorithm 2.
\begin{algorithm}
    \SetKwInOut{Input}{Input}
    \SetKwInOut{Output}{Output}
    \Input{$\left\{N_{a}\right\}$, $\left\{N_{s}\right\}$, $Q_{ini}$, $\{Nbr_{j}^{a}\}$}
    \Output{Optimized action $Q_{j}{'}$}
     $Q_{j}^{'} = 0$, $Q_{j} = Q_{ini}$.\\
     Broadcast `Wakeup' message at transmission power $Q_{j}$.\\
     \For{each `Wakeup'  message received from each anchor node}
     {    Build its neighbor anchor list $\{N_{j}^{a}\}$.
     }
     Broadcast `AnchorNbr' message at the transmission power $Q_{j}$.\\
     \For{each `Request'  message received from each sensor node $g \in \{Nbr_{j}^{a}\}$}
     {    Calculate the utility $U_{F}^{j}$.
     }
     $Q_{j}^{'} \gets argmax\ U_{F}^{j}$.\\
     Broadcast `Reply' message at transmission power $Q_{j}^{'}$.
    \caption{Topology control of an anchor node (follower)}
    \label{sd:pseudoCode02}
\end{algorithm}
}
\end{itemize}
Finally, after sensor nodes receive the required number of beacon location information from its neighboring anchor nodes, they execute their localization procedure. Since the main purpose of this paper is energy efficiency improvement of localization by topology control, we will assume the trilateration technique \cite{manolakis1996efficient} in the next section, for node localization to illustrate the proposed EELA. In trilateration technique, each sensor node requires three anchor nodes in order to obtain its location, i.e., $n_{min}^{req}=3$.

\section{Numerical Evaluations}
\label{sec:Simulations}
\subsection{Simulation Settings}
We consider a network of 10-50 sensor nodes in a 3D underwater region of $2500m \times 2500m \times 2500m$ with four anchor nodes on the water surface. In our simulation, the transmission range $R_{i}$ is a continuous number in $(0,\ max\_range]$. We set the $max\_range=\sqrt{2500^{2}+2500^{2}+2500^{2}}m$ which is the max distance in the simulation region. Initially, the value of all nodes' transmission range are $R_{ini}$. For each simulation, Underwater Acoustic Network (UAN) models of NS-3 are utilized for generating the channels, PHY and MAC layers, as they are commonly used for modeling underwater networks~\cite{parrish2008system}. The transmission power $P_{i}$ or $Q_{j}$ (TxPower attribute in UanPhyGen model) is initially set for a given range $R_{i}$ or $R_{j}$ using the Thorp's propagation model \cite{brekhovskikh1982scattering}. Weights of anchor nodes' utility function are taken as $w_{1j} = 0.4$ and $w_{2j} = 0.6$. On the other hand, sensor nodes' weights $w_{1i}$ and $w_{2i}$ are set to 0.1 and 0.9. This is because maximizing the ability of finding anchor nodes is crucial, while setting $w_{1i}=0.1$ for energy consumption still ensures very high energy-efficiency, and as will show by the simulation results. Other simulation parameters are listed in Table \ref{tb:paras}.
\begin{table}[t]
  \centering
    \begin{tabular}{ll}
		\hline
		Parameter & Value \\
		\hline
        Node mobility model & Meandering current mobility model \cite{caruso2008meandering} \\
        Channel Frequency & $22\ kHz$ \cite{misra2015game}\\
        Modulation technique & $FSK$ \cite{misra2015game}\\
        Data rate & $500\ bps$ \cite{misra2015game}\\
        Speed of sound & $1500\ m/s$ \cite{misra2015game}\\
        Wave propagation model & Thorp's propagation model \cite{brekhovskikh1982scattering}\\
        Receive and Idle power & $0.1\ watts$ \\
        Sleep power & $1 \times 10^{-4}\ watts$ \\
		\hline
    \end{tabular}
    \caption{Simulation parameters}
    \label{tb:paras}
\end{table}

In the simulations, sensor nodes are randomly deployed under water while anchor nodes are randomly deployed on the water surface in the simulation area. Any sensor node gets localized after receiving $n_{min}^{req}$ number of replies from anchor nodes. All nodes move according to the velocity of ocean current, following the Meandering Current Mobility (MCM) model \cite{caruso2008meandering}. In MCM, the effect of the meandering sub-surface currents and vortices are considered for nodes moving. The sensor nodes mobility ($v_{m}$) is set to 2.0m/s, 3.0m/s, 4.0m/s \cite{misra2015game}. Each simulation runs for 5000 times to obtain the average results.

\subsection{Baseline Schemes}
In the proposed EELA scheme, both anchor and sensor nodes can select their optimal transmission range to communicate with each other. The performance of the proposed scheme is compared to the three schemes listed below.
\begin{enumerate}
\item [1)] OLTC \cite{misra2015game}: an anchor node dynamically selects a transmission range from $[0, Q_{max}]$ in order to maximize the number of neighbors yet to be localized. Only anchor nodes can adjust their transmission range while sensor nodes always use the maximum transmission range to send messages.
\item [2)] EELA-Min: the scheme without the dynamic transmission power optimization, i.e., both anchor and sensor nodes use the fixed minimum transmission range to broadcast message.
\item [3)] EELA-Max: the scheme without the dynamic transmission power optimization, i.e., both anchor and sensor nodes use the fixed maximum transmission range to broadcast message.
\end{enumerate}
\subsection{Performance Metrics}
\label{subse:perMetric}
The following metrics are adopted to evaluate the performance of EELA.
\begin{enumerate}
\item [1)] Localization coverage: the ratio of the number of localized sensor nodes to the total number of sensor nodes,
    \begin{equation}
    \label{eq:coverage01}
        C=\frac{|\left\{N_{s\_l}\right\}|}{|\left\{N_{s}\right\}|}, \quad \nonumber \\
    \end{equation}
    where $\left\{N_{s\_l}\right\}$ is the set of sensor nodes which already have obtained their locations and the set of sensor node is $\left\{N_{s}\right\}$.
\item [2)] Average energy consumption per sensor node: the ratio of total energy consumption of sensor nodes to the number of sensor nodes,
    \begin{equation}
    \label{eq:AveSnNode}
        \varepsilon_{\left\{N_{s\_l}\right\}}^{avg}=\frac{\sum_{i=1}^{|\left\{N_{s\_l}\right\}|}\varepsilon_{i}}{|\left\{N_{s\_l}\right\}|}. \quad \nonumber \\
    \end{equation}
\item [3)] Average energy consumption per anchor node: the ratio of total energy consumption of anchor nodes to the number of anchor nodes,
    \begin{equation}
    \label{eq:AveAnNode}
        \varepsilon_{\left\{N_{a}\right\}}^{avg}=\frac{\sum_{j=1}^{|\left\{N_{a}\right\}|}\varepsilon_{j}}{|\left\{N_{a}\right\}|}, \quad \nonumber \\
    \end{equation}
    where the set of anchor node is $\left\{N_{a}\right\}$.
\item [4)] Average energy consumption per node: the ratio of the total energy consumption of all nodes to the number of all nodes,
    \begin{equation}
    \label{eq:AveperNode}
        \varepsilon_{tl}^{avg}=\frac{\sum_{i=1}^{|\left\{N_{s}\right\}|}\varepsilon_{i}+\sum_{j=1}^{|\left\{N_{a}\right\}|}}{|\left\{N_{s}\right\}|+|\left\{N_{a}\right\}|}. \quad \nonumber \\
    \end{equation}
\item [5)] Average localization error,
    \begin{equation}
    \label{eq:AveLoError}
        \frac{\sum_{i=1}^{|\left\{N_{s\_l}\right\}|}\sqrt{(x_{i}-x_{i}^{'})^{2}+(y_{i}-y_{i}^{'})^{2}+(z_{i}-z_{i}^{'})^{2}}}{|\left\{N_{s\_l}\right\}|}. \quad \nonumber \\
    \end{equation}
  where for any localized sensor node node $i$, $(x_{i},y_{i},z_{i})$ and $(x_{i}',y_{i}',z_{i}')$ denote the original and the estimated locations, respectively.
\item [6)]Average localization delay: the time duration from a sensor node broadcasting a `Request' message to the time of obtaining its location.\\
\end{enumerate}
\subsection{Results and Analysis}

\subsubsection{Localization Coverage}
In Fig. \ref{fi:coverage}, the average localization coverage in function of the number of sensor nodes is presented.

We observe that the average localization coverage of both EELA and OLTC are between that of EELA-Max and EELA-Min, as expected. The proposed EELA scheme outperforms the reference OLTC scheme of \cite{misra2015game}. For EELA and OLTC, an increased number of sensor nodes results in a better localization coverage. This is because the increase of the number of sensor nodes entails a higher spatial density, hence more sensor nodes may be localized by anchor nodes with a given power. For EELA-Max, the number of successfully localized sensor nodes remains the same, since the maximum allowed transmission power is always used.

Compared to OLTC, the localization coverage achieved by EELA is $2.0\%$ higher on average. This is because in OLTC scheme, sensor nodes always send request with the maximum transmission power, which leads to a higher rate of packet collision. Thus, anchor nodes will receive fewer `Request' messages. However, anchor nodes use the optimal transmission power $P_{j} < P_{max}$ to reply, so that some sensor nodes may not receive the required number of anchor nodes, hence decreasing coverage. By contrast, in EELA, sensor nodes request with the optimal transmission power $P_{j} < P_{max}$ instead of using the maximum transmission power, so that the required number of anchor nodes may be reached with minimum energy consumption. On the other hand, anchor nodes also utilize the optimal transmission power to reply. Both the optimal transmission power for anchor node and sensor node are selected to reach Stackelberg Nash Equilibrium in section \ref{subse:SNE}. Hence, both anchor nodes and sensor nodes cannot improve their individual profit by single-sidedly changing their transmission power. Therefore, EELA can reduce the rate of packet collision, and achieve a higher coverage.

Compared to EELA-min, the localization coverage achieved in EELA is about 54\% higher on average, because sensor nodes always use the minimum transmission power to send `Request' messages while anchor nodes also use the minimum transmission power to reply, hence each sensor node receives the fewest beacon location messages to localize itself. Next, the localization coverage achieved in EELA-max is about 9\% (on average) higher than that in EELA. For EELA-max, although the `Request' messages from sensor nodes have higher probability to have collisions, anchor nodes always use the maximum transmission power to send `Reply' messages without considering received requests. Therefore, sensor nodes can always receive more beacon locations than those of other schemes. That is the reason why EELA-max always has the highest coverage. However, the use of higher transmission power leads to higher energy consumption, as shown next.
%
\begin{figure}
\center
\includegraphics[scale=0.45]{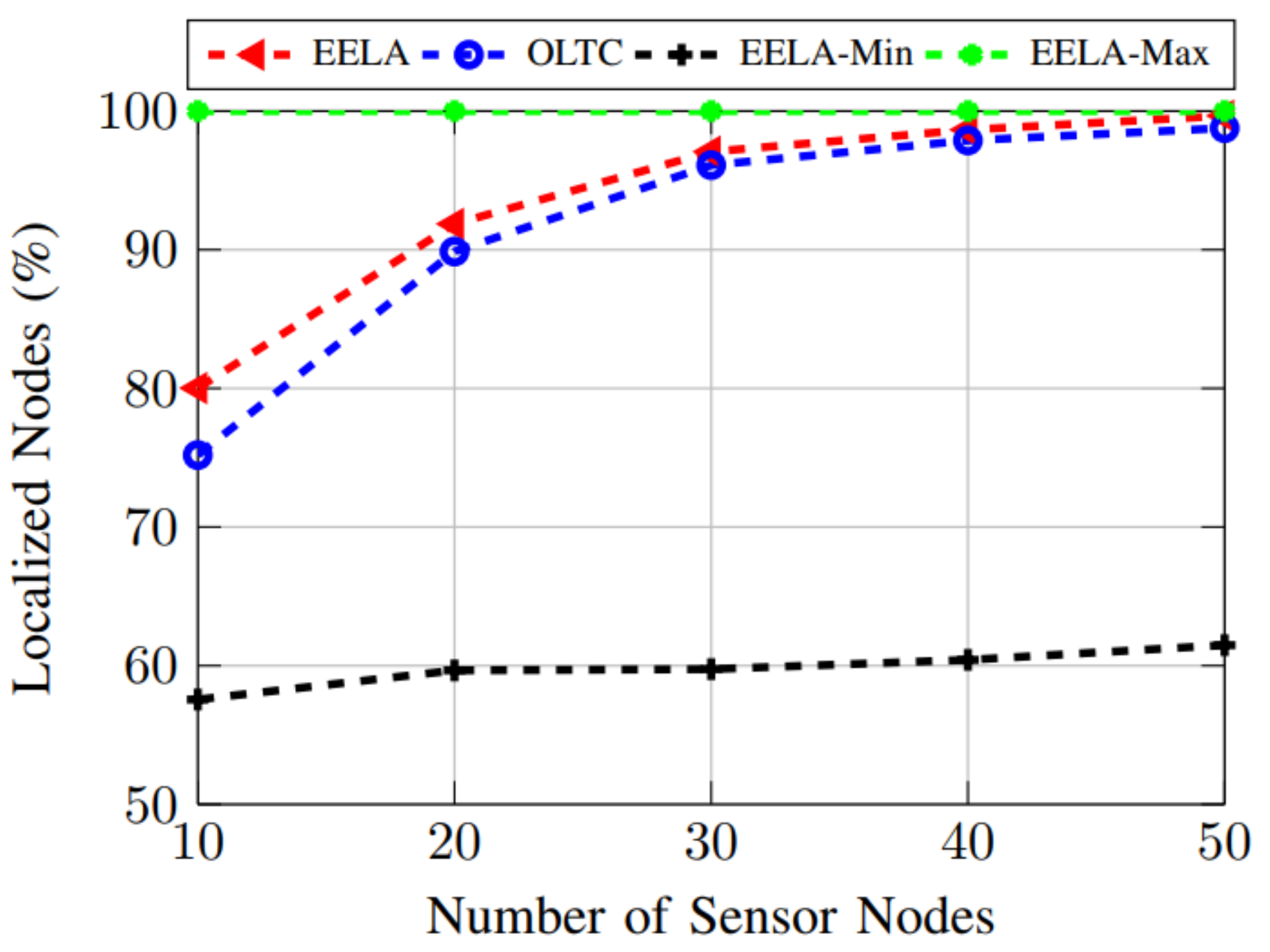}
\caption{Localization coverage}
\label{fi:coverage}
\end{figure}

\subsubsection{Average Energy Consumption Per Sensor Node}
\label{subsubse:AveLoNode}
The average energy consumption results for sensor nodes are given in Fig. \ref{fi:AveUnEny}.

We observe that the performance of the proposed EELA scheme is between that of EELA-Min and EELA-Max, while OLTC has the same consumption as EELA-Max since all sensor nodes transmit with maximum power. In OLTC, EELA-Min and EELA-Max, the average energy consumption is not affected by the node density of sensor nodes because of the fixed transmission power. The variations for proposed EELA are also steady due to the strategy of the proposed game and the constant number of anchor nodes in this simulation setting.

From Fig. \ref{fi:AveUnEny}, we notice that the average energy consumption per sensor node in EELA (326J) is about 53\% lower than that in OLTC (693J). This is thanks to our transmission power optimization strategy given the `one-hop' and `two-hop' nodes, which enables to reach the same number of anchor nodes as by OLTC, but with much lower energy. This mechanism significantly reduces the energy consumption of sensor nodes, and reduces the collisions of requests at the same time.
\begin{figure}
\center
\includegraphics[scale=0.45]{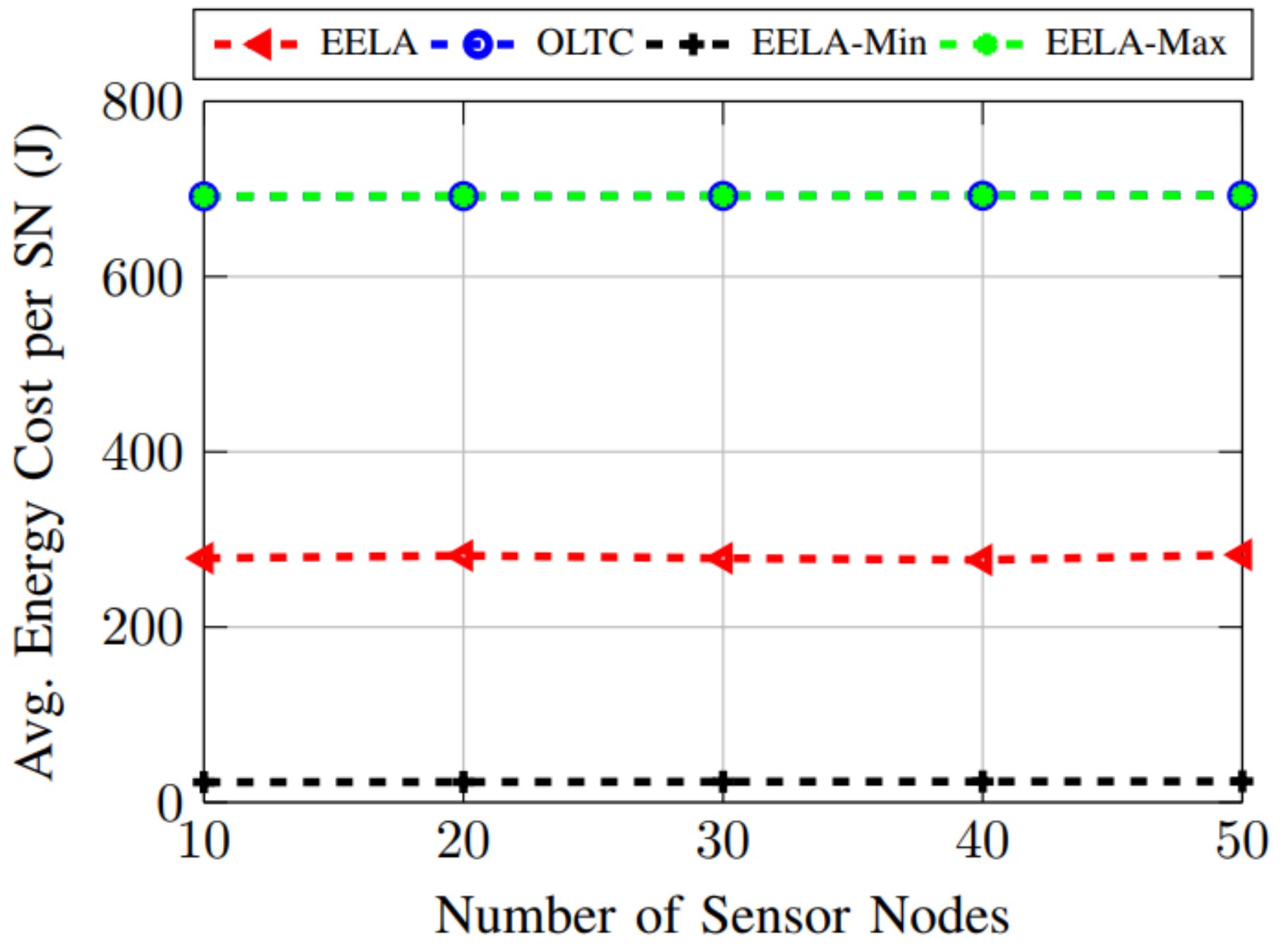}
\caption{Average energy consumption per Sensor Node (SN)}
\label{fi:AveUnEny}
\end{figure}
\subsubsection{Average Energy Consumption Per Anchor Node}
\label{subsub:AvePerAN}
Fig.~\ref{fi:AveAnEny} illustrates the average energy consumption of anchor nodes in function of the number of sensor nodes.

We observe that the energy cost of proposed EELA and OLTC lies between that of EELA-min and EELA-max, with a higher consumption for EELA compared to OLTC. Namely, we can see that the average energy consumption per anchor node in EELA (about 407J) is nearly 38\% higher than that in OLTC (about 295J). This is because anchor nodes in EELA need to broadcast twice in order to build their `two-hop' anchor neighboring list. As for OLTC, anchor nodes do not need to consider about their `two-hop' anchor neighboring nodes. As shown next, EELA slightly increases the energy consumption of anchor nodes in order to improve the performance of sensor nodes, which eventually improves the average energy consumption of all nodes. Note also that saving energy of underwater sensor nodes is a more crucial issue than that for anchor nodes, since anchor nodes are specifically deployed at the surface for enabling localization.
%
\begin{figure}
\center
\includegraphics[scale=0.45]{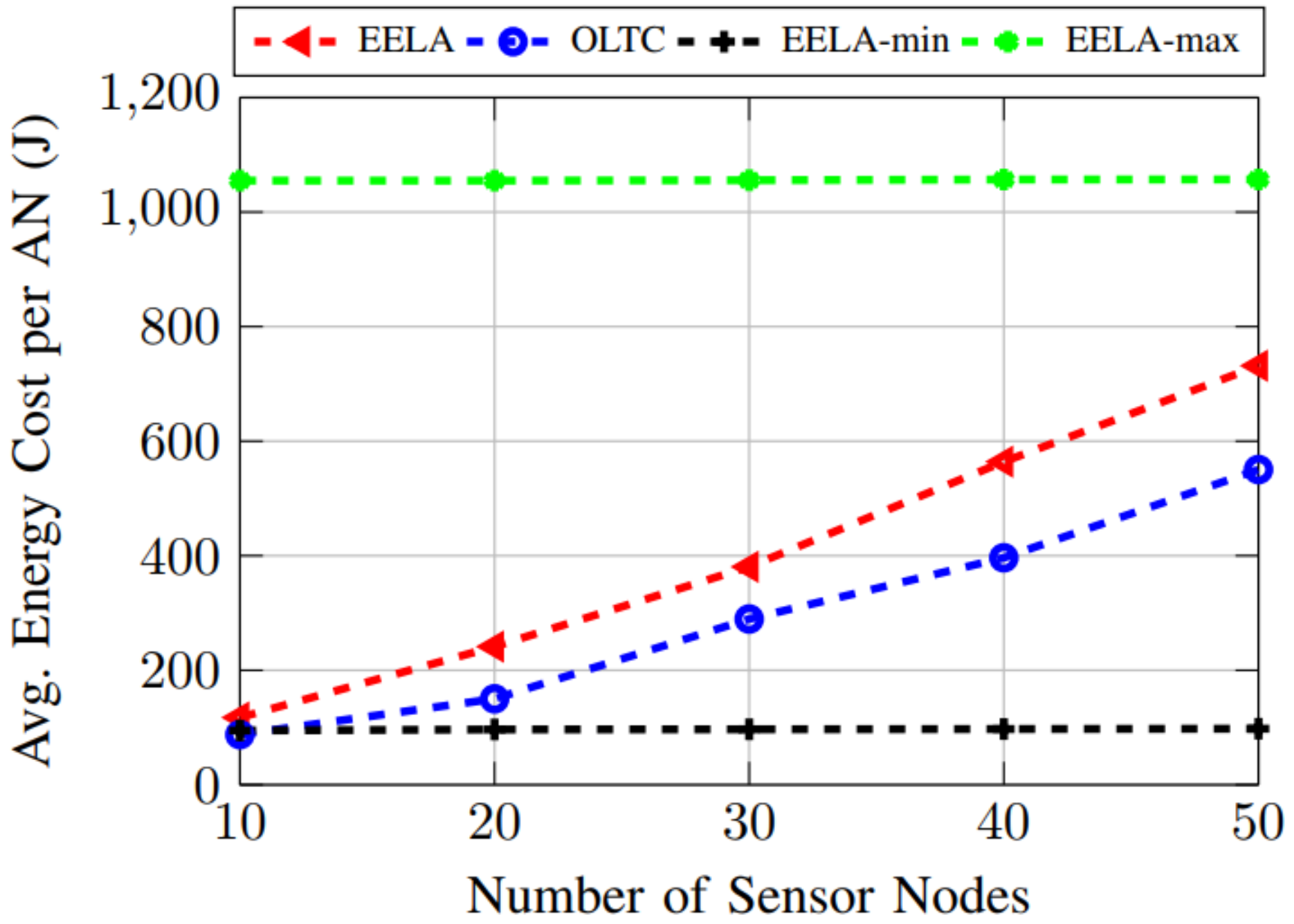}
\caption{Average energy consumption per Anchor Node (AN)}
\label{fi:AveAnEny}
\end{figure}
\subsubsection{Average Energy Consumption per Node}
\label{subsubse:AvePerNode}
Fig. \ref{fi:AveToEny} shows the average energy consumption over anchor and sensor nodes.

Overall, we observe that the proposed scheme largely reduces the average energy cost per node, compared to OLTC, i.e., around 48\% reduction. This shows that even if anchor nodes broadcast twice in the preprocessing phase, the proposed EELA still consumes much less energy in total, thanks to the energy-efficient power selection of sensor nodes. This is because in the deployment of practical localization systems, the number of sensor nodes is much larger than that of anchor nodes.

\begin{figure}
\center
\includegraphics[scale=0.45]{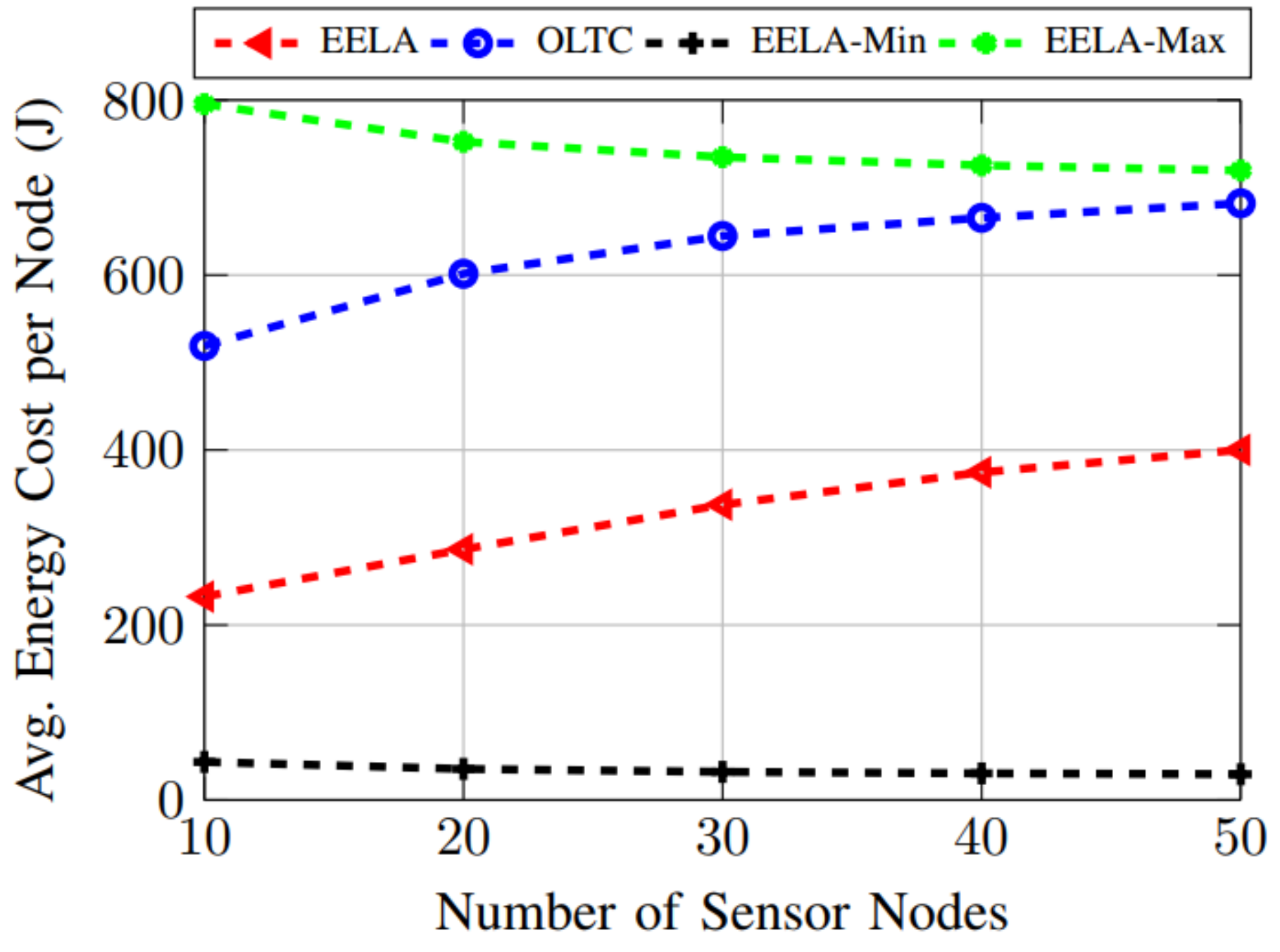}
\caption{Average energy consumption per node (anchor and sensor)}
\label{fi:AveToEny}
\end{figure}

%
\subsubsection{Average Localization Delay and Error}
Table \ref{tb:localDelayError} represents the Average Localization Delay (ALD) and Average Localization Error (ALE) of different schemes.

We notice that the ALD of EELA is almost the same as that of OLTC. The ALD of EELA-Min is lower than that of EELA by nearly 11\% on average. This is because in EELA-Min, the communication distance travelled by the acoustic signal is shorter than that in other schemes, since it uses the minimum transmission power. As for EELA-Max, the communication distance is longer than that in other schemes since it uses maximum transmission power. 

Next, for evaluating ALE, each sensor node requires 3 beacon locations and 3 distances from anchor nodes since the trilateration technique is considered. It is assumed that anchor nodes broadcast their precise coordinates, so the localization error is generated by the mobility of nodes and depends on the distance between anchor and sensor nodes. From Table \ref{tb:localDelayError}, we can see that EELA performs slightly better than OLTC, while the lowest and highest ALE are achieved by EELA-min and EELA-max, respectively. Overall, the ALEs are at comparable and reasonable levels for all algorithms.


\begin{table}[t]
  \centering
    \begin{tabular}{ccccc}
		\hline
		Metric & EELA & OLTC & EELA-Min & EELA-Max\\
		\hline
        ALD (s) & 6.87 & 6.85 & 6.11 & 7.15 \\
        ALE (m) & 3.18 & 3.24 & 3.17 & 3.30\\
		\hline
    \end{tabular}
    \caption{Average localization delay and error}
    \label{tb:localDelayError}
\end{table}
\subsection{Environmental Influences in Simulation}
In this section, we consider different realistic values of the current speeds ($v_{m} = 2.0,\ 3.0,\ 4.0\ m/s$) to observe the effects of environment changes.

Table \ref{tb:APDS} gives the average performance of EELA with different speed, in terms of the Average Localization Coverage (ALC), Average Energy per Node (AEN) and ALD. We observe that the performance of proposed EELA is rather constant under different speeds, showing the stability of EELA against current variations.
\begin{table}[t]
  \centering
    \begin{tabular}{cccc}
		\hline
		Metric & $2m/s$ & $3m/s$ & $4m/s$\\
		\hline
        ALC (\%)& 96.24 & 96.16 & 96.26 \\
        AEN (J) & 249.15 & 249.44 & 247.46 \\
        ALD (s) & 6.90 & 6.89 & 6.89 \\
		\hline
    \end{tabular}
    \caption{Average performance of EELA under various current speeds}
    \label{tb:APDS}
\end{table}

Finally, Fig. \ref{fi:AveLoErrorSpeed} shows that, the higher the current velocity (from 2m/s to 4m/s), the higher the average localization error (from 3.2m to 8.4m). This is due to the delay between the time where anchor nodes' broadcast  their messages used for location estimation, and the actual. For example, if this transmission delay is between 0 to 3 seconds, the node may move 0-6m. When the current speed is 2m/s, it is reasonable that the average transmission error is around 3.2m. Similarly, when the speed of current is 4m/s, the average localization error is around 8.4m. However, the average localization errors remain constant with the number of sensors, which also shows the stability of the proposed EELA scheme in terms of ALE.

\begin{figure}
\center

\includegraphics[scale=0.45]{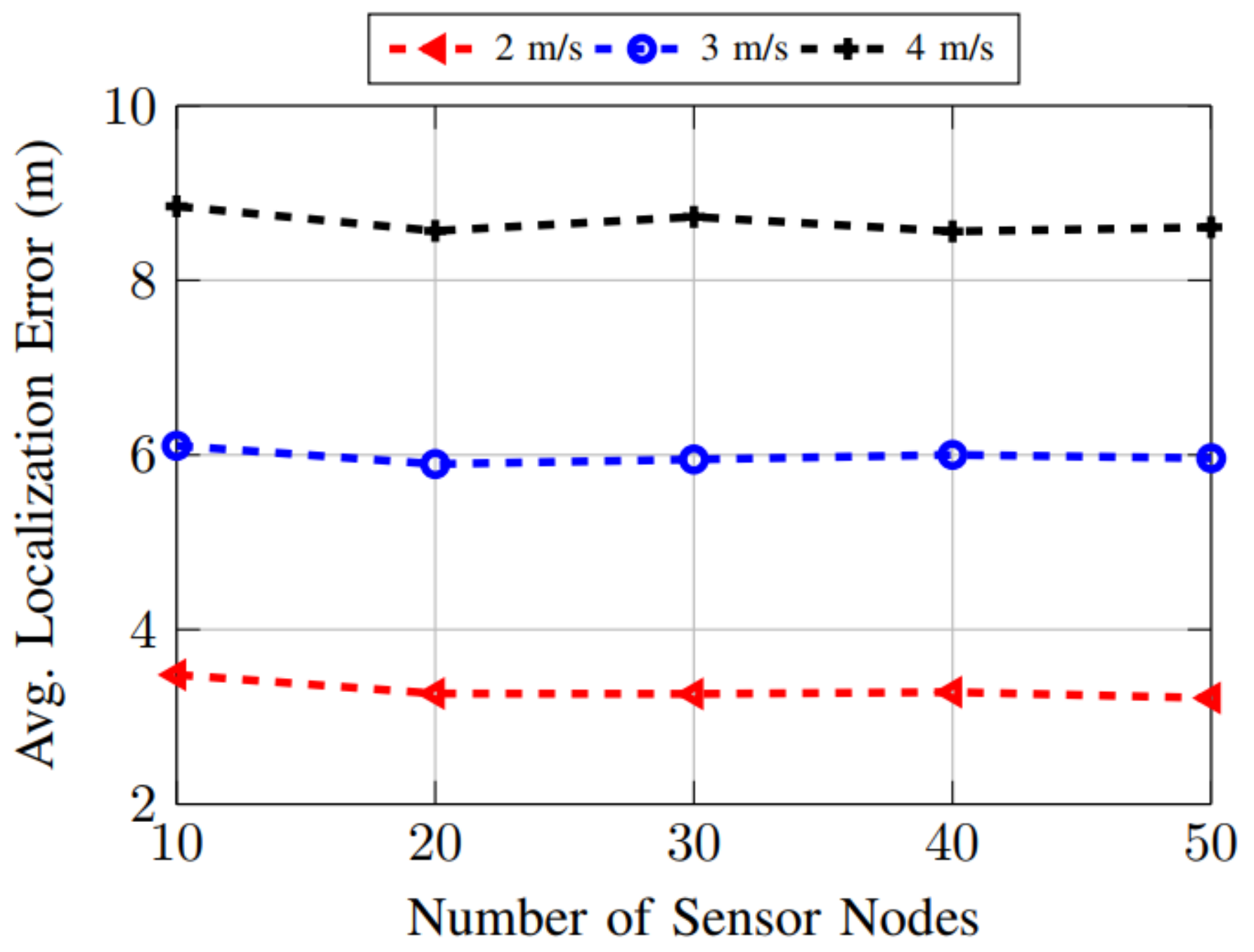}

\caption{Average localization error under various current speeds}
\label{fi:AveLoErrorSpeed}
\end{figure}
\section{Conclusion}
\label{sec:Conclusion}
We have considered the problem of energy-efficient sensor node localization using multiple anchor nodes, in underwater sensor networks where battery saving is essential. A Single-Leader-Multi-Follower Stackelberg game is used to model the considered localization problem, where anchor nodes act as followers of each sensor node, which acts as a leader. Considering the trade-off between localization ability and energy consumption, optimal transmission power strategies are devised for anchor and sensor nodes, which are shown to achieve Nash Equilibrium. Based on this analysis, we have proposed the EELA algorithm defining the communication protocol among anchor and sensor nodes, for enabling energy-efficient localization. Simulation results demonstrate that compared to baseline schemes, the proposed EELA enables similar or better performance in terms of localization coverage, errors and delays, while drastically reducing the amount of consumed energy, i.e., down to half the consumption of reference OLTC.

In the future work, we will enhance the proposed method to cope with the dynamic variations of the environment, by incorporating learning strategies.
\section*{Acknowledgment}
Yali Yuan would like to thank the scholarship support from the China Scholarship Council (CSC).

This work was supported by the Grants-in-Aid (JSPS Kakenhi) for Scientific Research no. 17K06453 from the Ministry of Education, Science, Sports, and Culture of Japan, and by the NII MoU grants.

\bibliographystyle{ieeetr}
\bibliography{TVTEELM}

\begin{thebibliography}{10}

\bibitem{bochem2016tri}
A.~Bochem, Y.~Yuan, and D.~Hogrefe, ``Tri-mcl: Synergistic localization for
  mobile ad-hoc and wireless sensor networks,'' in {\em 2016 IEEE 41st
  Conference on Local Computer Networks (LCN)}, pp.~333--338, Nov 2016.

\bibitem{xie2006vbf}
P.~Xie, J.-H. Cui, and L.~Lao, ``Vbf: vector-based forwarding protocol for
  underwater sensor networks,'' in {\em Networking}, vol.~3976, pp.~1216--1221,
  Springer, 2006.

\bibitem{akyildiz2005underwater}
I.~F. Akyildiz, D.~Pompili, and T.~Melodia, ``Underwater acoustic sensor
  networks: research challenges,'' {\em Ad hoc networks}, vol.~3, no.~3,
  pp.~257--279, 2005.

\bibitem{tan2011survey}
H.-P. Tan, R.~Diamant, W.~K. Seah, and M.~Waldmeyer, ``A survey of techniques
  and challenges in underwater localization,'' {\em Ocean Engineering},
  vol.~38, no.~14, pp.~1663--1676, 2011.

\bibitem{freitag2005whoi}
L.~Freitag, M.~Grund, S.~Singh, J.~Partan, P.~Koski, and K.~Ball, ``The whoi
  micro-modem: An acoustic communications and navigation system for multiple
  platforms,'' in {\em Proceedings of MTS/IEEE on OCEANS}, pp.~1086--1092,
  2005.

\bibitem{teymorian20093d}
A.~Y. Teymorian, W.~Cheng, L.~Ma, X.~Cheng, X.~Lu, and Z.~Lu, ``3d underwater
  sensor network localization,'' {\em IEEE Transactions on Mobile Computing},
  vol.~8, no.~12, pp.~1610--1621, 2009.

\bibitem{zhou2011scalable}
Z.~Zhou, Z.~Peng, J.-H. Cui, Z.~Shi, and A.~Bagtzoglou, ``Scalable localization
  with mobility prediction for underwater sensor networks,'' {\em IEEE
  Transactions on Mobile Computing}, vol.~10, no.~3, pp.~335--348, 2011.

\bibitem{luo2016localization}
H.~Luo, K.~Wu, Y.-J. Gong, and L.~M. Ni, ``Localization for drifting restricted
  floating ocean sensor networks,'' {\em IEEE Transactions on Vehicular
  Technology}, vol.~65, no.~12, pp.~9968--9981, 2016.

\bibitem{misra2015game}
S.~Misra, T.~Ojha, and A.~Mondal, ``Game-theoretic topology control for
  opportunistic localization in sparse underwater sensor networks,'' {\em IEEE
  transactions on mobile computing}, vol.~14, no.~5, pp.~990--1003, 2015.

\bibitem{isik2009three}
M.~T. Isik and O.~B. Akan, ``A three dimensional localization algorithm for
  underwater acoustic sensor networks,'' {\em IEEE Transactions on Wireless
  Communications}, vol.~8, no.~9, pp.~4457--4463, 2009.

\bibitem{han2012localization}
G.~Han, J.~Jiang, L.~Shu, Y.~Xu, and F.~Wang, ``Localization algorithms of
  underwater wireless sensor networks: A survey,'' {\em Sensors}, vol.~12,
  no.~2, pp.~2026--2061, 2012.

\bibitem{tuna2017survey}
G.~Tuna and V.~C. Gungor, ``A survey on deployment techniques, localization
  algorithms, and research challenges for underwater acoustic sensor
  networks,'' {\em International Journal of Communication Systems}, vol.~30,
  no.~17, pp.~1--21, 2017.

\bibitem{syed2006time}
A.~A. Syed, J.~S. Heidemann, {\em et~al.}, ``Time synchronization for high
  latency acoustic networks,'' in {\em 2006 IEEE Conference on The 25th
  Conference on Computer Communications (INFOCOM)}, vol.~6, pp.~1--12, 2006.

\bibitem{liu2016joint}
J.~Liu, Z.~Wang, J.-H. Cui, S.~Zhou, and B.~Yang, ``A joint time
  synchronization and localization design for mobile underwater sensor
  networks,'' {\em IEEE Transactions on Mobile Computing}, vol.~15, no.~3,
  pp.~530--543, 2016.

\bibitem{erol2007auv}
M.~Erol, L.~F.~M. Vieira, and M.~Gerla, ``Auv-aided localization for underwater
  sensor networks,'' in {\em 2007 International Conference on Wireless
  Algorithms, Systems and Applications (WASA)}, pp.~44--54, August 2007.

\bibitem{waldmeyer2011multi}
M.~Waldmeyer, H.-P. Tan, and W.~K. Seah, ``Multi-stage auv-aided localization
  for underwater wireless sensor networks,'' in {\em 2011 IEEE Workshops of
  International Conference on Advanced Information Networking and Applications
  (WAINA)}, pp.~908--913, May 2011.

\bibitem{ojha2013hasl}
T.~Ojha and S.~Misra, ``Hasl: High-speed auv-based silent localization for
  underwater sensor networks,'' in {\em International Conference on
  Heterogeneous Networking for Quality, Reliability, Security and Robustness},
  pp.~128--140, 2013.

\bibitem{li2005localized}
N.~Li and J.~C. Hou, ``Localized topology control algorithms for heterogeneous
  wireless networks,'' {\em IEEE/ACM Transactions on Networking (TON)},
  vol.~13, no.~6, pp.~1313--1324, 2005.

\bibitem{sethu2010new}
H.~Sethu and T.~Gerety, ``A new distributed topology control algorithm for
  wireless environments with non-uniform path loss and multipath propagation,''
  {\em Ad Hoc Networks}, vol.~8, no.~3, pp.~280--294, 2010.

\bibitem{ren2009game}
H.~Ren and M.~Q.-H. Meng, ``Game-theoretic modeling of joint topology control
  and power scheduling for wireless heterogeneous sensor networks,'' {\em IEEE
  Transactions on Automation Science and Engineering}, vol.~6, no.~4,
  pp.~610--625, 2009.

\bibitem{zhu2017game}
J.~Zhu, D.~Jiang, S.~Ba, and Y.~Zhang, ``A game-theoretic power control
  mechanism based on hidden markov model in cognitive wireless sensor network
  with imperfect information,'' {\em Neurocomputing}, vol.~220, pp.~76--83,
  2017.

\bibitem{liu2012topology}
L.~Liu, R.~Wang, and F.~Xiao, ``Topology control algorithm for underwater
  wireless sensor networks using gps-free mobile sensor nodes,'' {\em Journal
  of Network and Computer Applications}, vol.~35, no.~6, pp.~1953--1963, 2012.

\bibitem{liu2014complex}
L.~Liu, Y.~Liu, and N.~Zhang, ``A complex network approach to topology control
  problem in underwater acoustic sensor networks,'' {\em IEEE Transactions on
  Parallel and Distributed Systems}, vol.~25, no.~12, pp.~3046--3055, 2014.

\bibitem{harris2007modeling}
A.~F. Harris~III and M.~Zorzi, ``Modeling the underwater acoustic channel in
  ns2,'' in {\em Proceedings of the 2nd international conference on Performance
  evaluation methodologies and tools}, p.~18, ICST (Institute for Computer
  Sciences, Social-Informatics and Telecommunications Engineering), 2007.

\bibitem{erol2011survey}
M.~Erol-Kantarci, H.~T. Mouftah, and S.~Oktug, ``A survey of architectures and
  localization techniques for underwater acoustic sensor networks,'' {\em IEEE
  Communications Surveys \& Tutorials}, vol.~13, no.~3, pp.~487--502, 2011.

\bibitem{stojanovic2007relationship}
M.~Stojanovic, ``On the relationship between capacity and distance in an
  underwater acoustic communication channel,'' {\em ACM SIGMOBILE Mobile
  Computing and Communications Review}, vol.~11, no.~4, pp.~34--43, 2007.

\bibitem{stojanovic1995underwater}
M.~Stojanovic, ``Underwater acoustic communications,'' in {\em IEEE
  Professional Program Proceedings on Electro/95 International}, pp.~435--440,
  1995.

\bibitem{sherali1983stackelberg}
H.~D. Sherali, A.~L. Soyster, and F.~H. Murphy, ``Stackelberg-nash-cournot
  equilibria: characterizations and computations,'' {\em Operations Research},
  vol.~31, no.~2, pp.~253--276, 1983.

\bibitem{manolakis1996efficient}
D.~E. Manolakis, ``Efficient solution and performance analysis of 3-d position
  estimation by trilateration,'' {\em IEEE Transactions on Aerospace and
  Electronic systems}, vol.~32, no.~4, pp.~1239--1248, 1996.

\bibitem{parrish2008system}
N.~Parrish, L.~Tracy, S.~Roy, P.~Arabshahi, and W.~L. Fox, ``System design
  considerations for undersea networks: Link and multiple access protocols,''
  {\em IEEE Journal on Selected Areas in Communications}, vol.~26, no.~9, 2008.

\bibitem{brekhovskikh1982scattering}
L.~Brekhovskikh and Y.~Lysanov, ``Scattering of sound at rough surfaces,'' in
  {\em Fundamentals of Ocean Acoustics}, pp.~173--207, 1982.

\bibitem{caruso2008meandering}
A.~Caruso, F.~Paparella, L.~F.~M. Vieira, M.~Erol, and M.~Gerla, ``The
  meandering current mobility model and its impact on underwater mobile sensor
  networks,'' in {\em 2008 IEEE Conference on The 27th Conference on Computer
  Communications (INFOCOM)}, pp.~221--225, May 2008.

\end{thebibliography}
\begin{IEEEbiography}[{\includegraphics[width=1in,height=1.25in,clip,keepaspectratio]{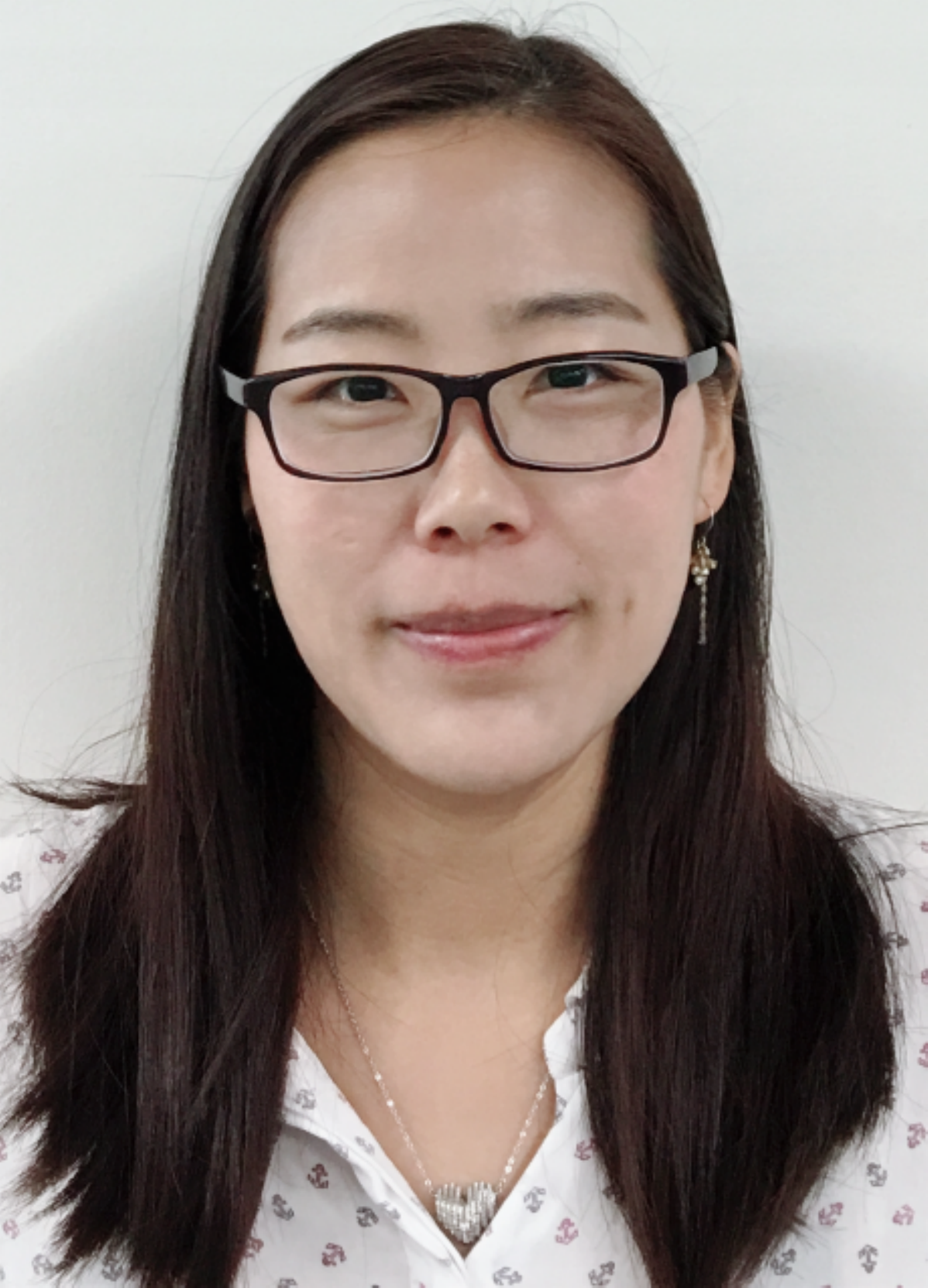}}]{Yali Yuan}
received the M.Sc. degree from University of Lanzhou, Lanzhou, China, in 2015. She is currently working towards her Ph.D. degree in Telematics Group at the Institute of Computer Science, University of G\"ottingen, G\"ottingen, Germany. Her research interests include various topics related to wireless networks and machine learning, in particular for the localization and security.
\end{IEEEbiography}
\begin{IEEEbiography}[{\includegraphics[width=1in,height=1.25in,clip,keepaspectratio]{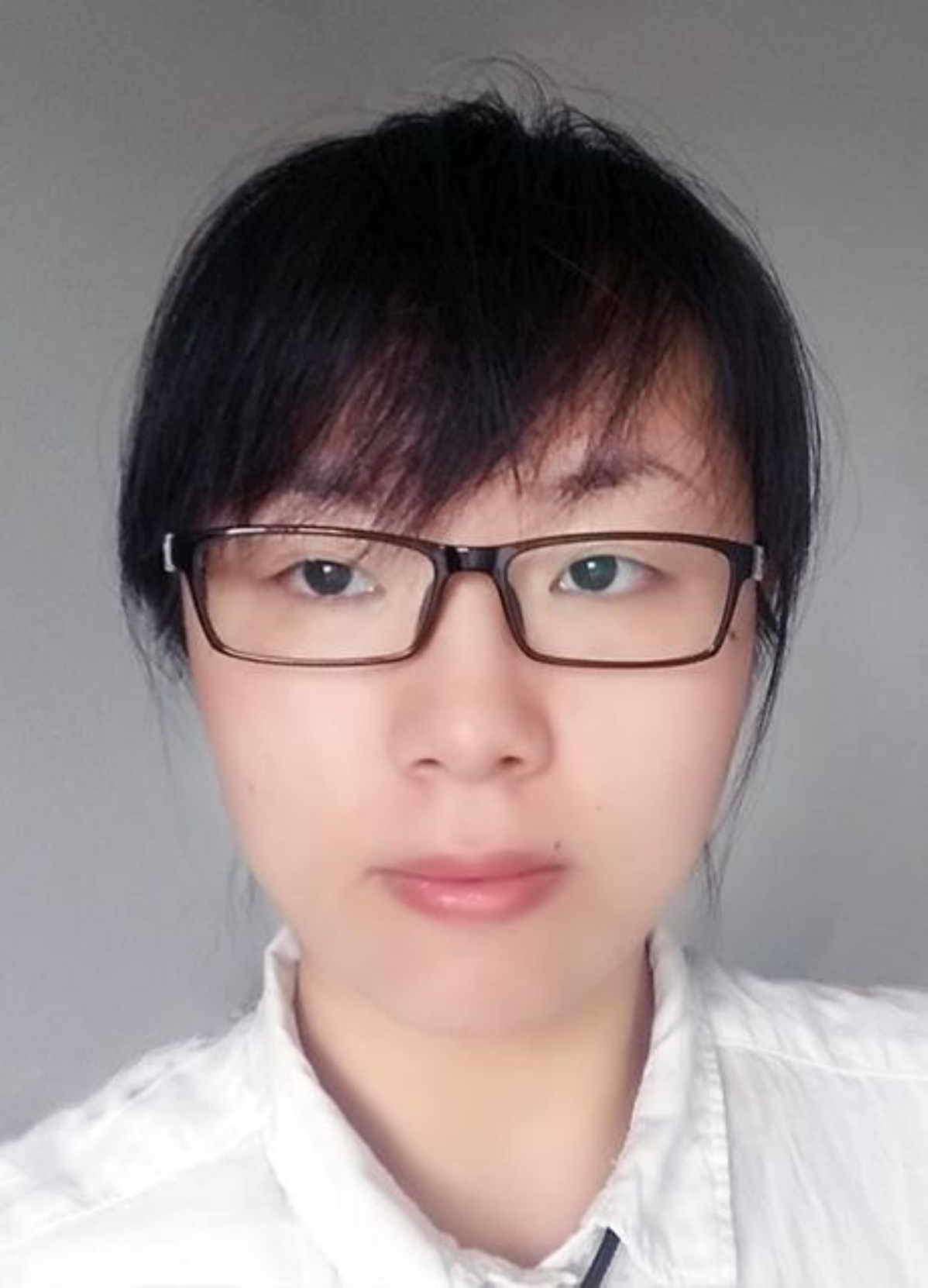}}]{Chencheng Liang}
received the B.S. degree in software engineering from University of Chengdu, China, in 2014. She is currently working towards her M.Sc. degree in Telematics Group at the Institute of Computer Science, University of G\"ottingen, Germany. Her research interests include wireless networks and machine learning.
\end{IEEEbiography}
\begin{IEEEbiography}[{\includegraphics[width=1in,height=1.25in,clip,keepaspectratio]{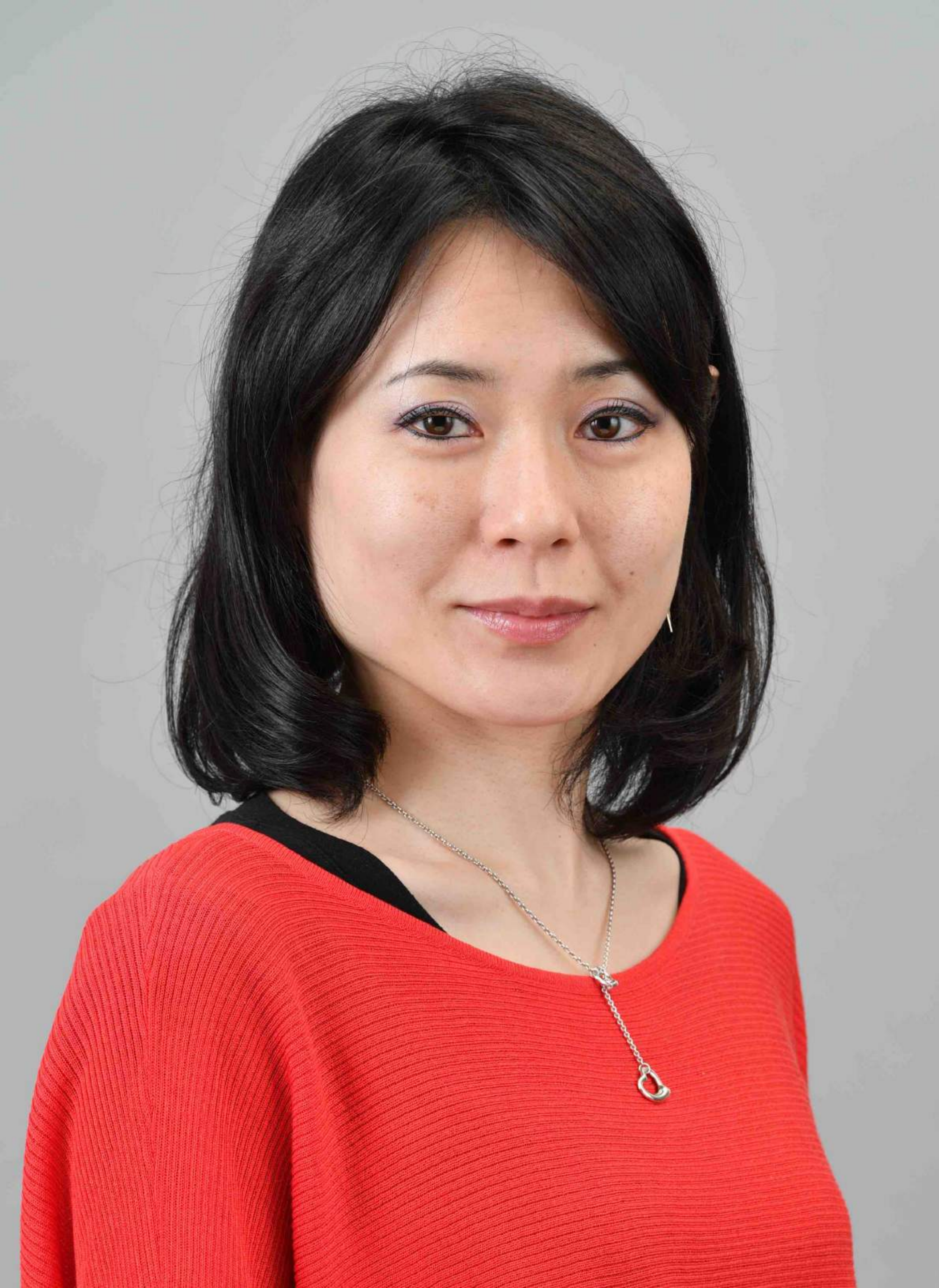}}]{Megumi Kaneko}
received her B.S. and MSc. degrees in communication engineering in 2003 and 2004 from Institut National des T\' {e}l\'{e}communications (INT, T\'{e}l\'{e}com SudParis), France, jointly with a MSc. from Aalborg University, Denmark, where she received her Ph.D. degree in 2007. In May 2017, she obtained her HDR degree (French Doctoral Habilitation for Directing Researches at Professor position) from Paris-Sud University, France. From January to July 2007, she was a visiting researcher in Kyoto University, Kyoto, Japan, and a JSPS post-doctoral fellow from April 2008 to August 2010. From September 2010 to March 2016, she was an Assistant Professor in the Department of Systems Science, Graduate School of Informatics, Kyoto University. She is currently an Associate Professor at the National Institute of Informatics as well as the Graduate University of Advanced Studies (Sokendai), Tokyo, Japan. Her research interests include wireless communications, radio resource and interference management, wireless sensing and cross-layer network protocols. She received the 2009 Ericsson Young Scientist Award, the IEEE Globecom 2009 Best Paper Award, the 2011 Funai Young Researcher's Award, the WPMC 2011 Best Paper Award, the 2012 Telecom System Technology Award and the 2016 Inamori Foundation Research Grant.
\end{IEEEbiography}

\begin{IEEEbiography}[{\includegraphics[width=1in,height=1.25in,clip,keepaspectratio]{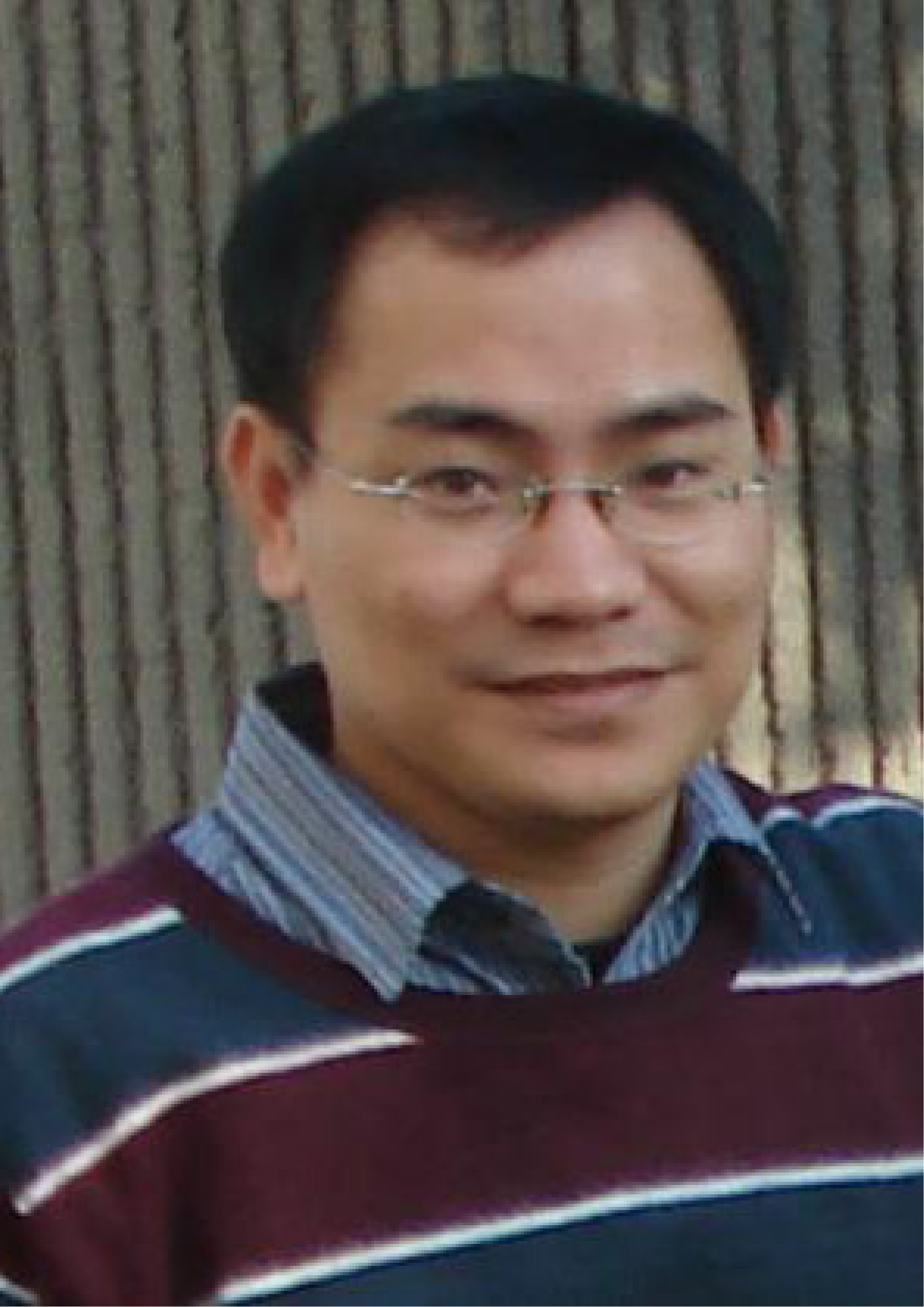}}]{Xu Chen}
received the Ph.D. degree in information engineering from The Chinese University of Hong Kong in 2012. He was a Post-Doctoral Research Associate with Arizona State University, Tempe, USA, from 2012 to 2014, and a Humboldt Fellow with the University of G\"ottingen, Germany, from 2014 to 2016. He is currently a Professor with the School of Data and Computer Science, Sun Yat-sen University, Guangzhou, China. He received 2017 IEEE ComSoc Pacifc-Asia Outstanding Young Researcher Award, 2017 IEEE ComSoc Young Professional Best Paper Award, 2014 Hong Kong Young Scientist Runner-Up Award, Best Paper Award in IEEE ICC 2017, Best Paper Runner-Up Award in IEEE INFOCOM 2014, and Honorable Mention Award in IEEE ISI 2010.
\end{IEEEbiography}
\begin{IEEEbiography}[{\includegraphics[width=1in,height=1.25in,clip,keepaspectratio]{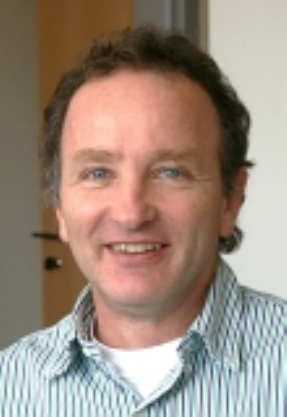}}]{Dieter Hogrefe}
received the Ph.D. degree in computer science and mathematics from the University of Hannover, Germany, in 1985. From 1983 to 1986, he was with the SIEMENS Research Center, Munich, and involved in the area of analysis of telecommunication systems. He was responsible for the protocol simulation and analysis of the CCS No. 7. From 1996 to 2010, he was the Chairman of the Technical Committee Methods for Testing and Specication with the European telecommunication Standards Institute, ETSI. From 2011 to 2013, he was the Dean of the Faculty of Mathematics and Computer Science. He held a full professor position with the University of Bern, the University of L\"ubeck, and the University of G\"ottingen and visiting positions with the University of Dortmund, Technical University Budapest, UC Berkeley, and Hamilton University. He has been a Full Professor (C4) for Telematics with the University of G\"ottingen since 2002. Since 2003, he has been the Director of the Institute of Computer Science. He has published numerous papers and two books on Internet technology, security of wireless sensor networks, analysis, simulation and testing of formally specified communication systems. His research activities are directed towards computer networks and communication software engineering.\end{IEEEbiography}
\end{document}